\newcommand{\ie}{i.\,e.,\xspace}
\newcommand{\eg}{e.\,g.,\xspace}
\newcommand{\etal}{et al.\xspace}
\newcommand{\argmin}{\operatorname{argmin}}
\newcommand{\argmax}{\operatorname{argmax}}
\newcommand{\exchange}[3]{#1^{+#2}_{-#3}}
\newcommand{\add}[2]{#1^{+#2}}
\newcommand{\remove}[2]{#1_{-#2}}
\newcommand{\dist}[1]{\mathrm{dist}(#1)}
\newcommand{\onedigit}[1]{\num[round-mode=places,round-precision=1]{#1}}
\newcommand{\gsMaxSpeed}{927.871445138909}
\newcommand{\gsMinSpeed}{42.96954466604031}
\newcommand{\gsPSevenFiveKQualNum}{99.38257368840114}
\newcommand{\gsPSevenFiveKSpeed}{127.83319715440305}
\newcommand{\minSpeedupKHundred}{42.96954466604031}
\newcommand{\gsLocalSpeedupOverNonLocal}{3.1341149912690742}
\newcommand{\gsLocalSpeedKTen}{365.83696556394904}
\newcommand{\gsLocalQualKTen}{92.0747944022351}
\newcommand{\wGSQualInc}{12.384054953251923}
\newcommand{\wGSSpeed}{793.6424400458743}
\newtheorem{definition}{Definition}
\newtheorem{lemma}{Lemma}
\begin{document}

\title{Local Search for Group Closeness Maximization \\ on Big Graphs
\thanks{This work is partially supported by German Research Foundation (DFG) grant ME 3619/3-2
within Priority Programme 1736 \textit{Algorithms for Big Data}.}
}

\author{
\IEEEauthorblockN{
Eugenio Angriman,
Alexander van der Grinten,
Henning Meyerhenke}
\IEEEauthorblockA{
Department of Computer Science\\
Humboldt-Universit\"at zu Berlin, Germany\\
{\{angrimae, avdgrinten, meyerhenke\}@hu-berlin.de}}
}

\maketitle
\IEEEpubid{\begin{minipage}{\textwidth}\ \\[12pt] \centering
\copyright\ 2019 IEEE.
Personal use of this material is permitted. Permission from IEEE must be
obtained for all other uses, in any current or future media, including
reprinting/republishing this material for advertising or promotional purposes, creating new
collective works, for resale or redistribution to servers or lists, or reuse of any copyrighted
component of this work in other works.\hfill
\end{minipage}}

\IEEEpubidadjcol

\begin{abstract}
In network analysis and graph mining, closeness centrality is a popular measure to infer the
importance of a ver\-tex. Computing closeness efficiently for individual vertices
received considerable attention. The $\mathcal{NP}$-hard problem of \emph{group closeness maximization},
in turn, is more challenging: the objective is to find a vertex \emph{group} that is central
\emph{as a whole} and state-of-the-art heuristics for it do not scale to
very big graphs yet.

In this paper, we present new local search heuristics
for group closeness maximization.
By using randomized approximation techniques and dynamic data structures,
our algorithms are often able to perform locally optimal decisions efficiently.
The final result is a group with high (but not optimal) closeness centrality.

We compare our algorithms to the current state-of-the-art
greedy heuristic both on weighted and on unweighted real-world graphs.
For graphs with hundreds of millions of edges,
our local search algorithms take only around ten minutes, while greedy requires more than ten hours.
Overall, our new algorithms are between one and two orders of magnitude faster,
depending on the desired group size and solution quality.
For example, on weighted graphs and $k = 10$, our algorithms yield solutions
of $\onedigit{\wGSQualInc}\%$ higher quality, while also being
$\onedigit{\wGSSpeed}\times$ faster.
For unweighted graphs and $k = 10$, we achieve solutions within
$\onedigit{\gsPSevenFiveKQualNum}\%$ of the state-of-the-art quality
while being $\onedigit{\gsPSevenFiveKSpeed}\times$ faster.
\end{abstract}

\begin{IEEEkeywords}
centrality, group closeness, graph mining, network analysis
\end{IEEEkeywords}

\section{Introduction}
\label{sec:intro}
Identifying important vertices in large networks is one of the main problems in
network analysis~\cite{newman2018networks}.
For this purpose, several centrality measures have been introduced over the
past decades~\cite{boldi2014axioms}.
Among them, one of the most widely-used measures is closeness~\cite{bavelas1948mathematical}.
For a given vertex $v$, it is defined as the reciprocal of the average shortest-path distance
from $v$ to all other vertices.
The problem of identifying the $k$ vertices with highest closeness centrality has
received significant attention~\cite{bergamini2019computing,okamoto2008ranking,olsen2014efficient}.
In graph mining applications, however, it is often necessary to
determine a \emph{group} of vertices that is central \emph{as a whole} --
which is an orthogonal problem shown to be $\mathcal{NP}$-hard~\cite{chen2016efficient}.
One can view group closeness as special case (on graphs) of the well-known $k$-median
problem for facility location. Example applications include:
(i) retailers that want to advertise their product via social media; they could
select as promoters the group of $k$ members with highest centrality ($\approx$ influence over
the other members)~\cite{zhu2014maximizing}; (ii) in P2P networks, shared resources could be placed on
$k$ peers so that they are easily accessible by others~\cite{gkantsidis2004random};
(iii) in citation networks, group centrality measures can be employed
as alternative indicators for the influence of journals or papers
within their field~\cite{leydesdorff2007betweenness}.

Everett and Borgatti~\cite{everett1999centrality} formalized the concept of
centrality for groups of vertices; the closeness of a group $S$ is defined
as the reciprocal of the average distance from $S$ to all other vertices of the graph.
While exact algorithms to find a group with maximal group closeness are known
(\eg algorithms based on integer linear programming (ILP)~\cite{bergamini2018scaling}),
they do not scale to graphs with more than a few thousand edges.
Hence, in practice, heuristics
are used to find groups with high group closeness on large real-world data sets.
For example, Chen~\etal~\cite{chen2016efficient} proposed a greedy algorithm
that heuristically
computes a group $S$ with high closeness $C(S)$. To obtain a group of size $k$, it performs
$k$ iterations and adds in each iteration the vertex with highest marginal contribution to $S$.\footnote{While this greedy algorithm was claimed to have
	a bounded approximation quality
	(\eg in~\cite{chen2016efficient,bergamini2018scaling}),
	the proof of this bound relied on the assumption that $C$ is submodular.
	A recent update~\cite{DBLP:journals/corr/abs-1710-01144} to the conference
	version~\cite{bergamini2018scaling} revealed that, in fact, $C$ is \emph{not} submodular.
	We are not aware of any approximation algorithm for group closeness that scales to large graphs.}
It was shown empirically that this greedy algorithm yields solutions of very high quality (within $97\%$ of the optimum)~\cite{bergamini2018scaling}
-- at least for those small graphs where running a comparison against an exact algorithm is still
feasible within a few hours.
Due to greedy's promising quality, Bergamini~\etal~\cite{bergamini2018scaling} proposed techniques
to speed up the algorithm by pruning, \eg by exploiting the supermodularity of farness, \ie $1/C(S)$.
Pruning yields a significant acceleration (while retaining the same solution);
however, graphs with hundreds of millions of edges still require
several hours to complete.
Indeed, pruning is most effective when the group is already large.
When performing the first addition, however,
the greedy algorithm has to perform one (pruned) single-source shorted-path
(SSSP) computation \emph{for each vertex in the graph} to compute its marginal contribution,
and this phase scales super-linearly in the size
of the graph.

\IEEEpubidadjcol
\paragraph*{Our Contribution}
We present two new local search algorithms for group closeness: the first algorithm,
\emph{Local-Swap}, requires little time per iteration but can only
exchange vertices locally.
Our second algorithm, \emph{Grow-Shrink}, is able to perform non-local
vertex exchanges, but updating its data structures is more involved.
The final result of both algorithms is heuristic, \ie no approximation guarantee is known.
Yet, each iteration of Local-Swap maximizes (in approximation) a lower bound on the objective function,
while each iteration of Grow-Shrink locally maximizes the objective function itself.
Despite these favorable properties,
the time complexity of a single iteration of our algorithms
matches the time complexity of a single evaluation of the objective function,
\ie for unweighted graphs, it is linear in the size of the graph.

Our experiments show that our best algorithm,
\emph{extended Grow-Shrink}, finds solutions with a closeness score
greater than $\onedigit{\gsPSevenFiveKQualNum}\%$ of the score of a greedy solution,
while being $\onedigit{\gsPSevenFiveKSpeed}\times$ faster to compute ($k = 10$).
We see this algorithm as our main contribution for unweighted graphs.
When quality is not a primary concern, our other algorithms
can further accelerate the computation:
For example, the non-extended variant of
Grow-Shrink yields solutions for groups of size 10
whose quality is $\onedigit{91.1}\%$ compared to the state of the art;
in this case, it is $\onedigit{700.2}\times$ faster.
The speedup varies between $\onedigit{\gsMaxSpeed}\times$
and $\onedigit{\gsMinSpeed}\times$ for groups of sizes 5 and 100,
respectively.

On weighted graphs, our algorithms even improve both the quality and
the running time performance compared to the state of the art,
returning solutions of $\onedigit{\wGSQualInc}\%$ higher quality at
a speedup of $\onedigit{\wGSSpeed}\times$ ($k = 10$).
Other trade-offs between quality and performance are possible
and discussed in Section~\ref{sec:experiments}.

\section{Local Search for Group Closeness}
\label{sec:local_search}

Let $G = (V, E)$ be an undirected connected graph.
We allow both unweighted and positively weighted graphs $G$.
Subsets $S \subseteq V$ are called groups.
The \emph{farness} of any given group $S$ is defined as:
\[ f(S) = \sum_{v \in V \setminus S} \mathrm{dist}(S, v). \]
Here, $\dist{S, v}$ refers to the minimal shortest-path distance from any $s \in S$
to $v$ in $G$.
Furthermore, the (group) closeness
of $S$ is defined as
$C(S) = |V|/f(S)$,
\ie $C(S)$ is the reciprocal of the average distance of all vertices in $V \setminus S$
to the nearest vertex in $S$.
Recall that determining the group $S^*$ that maximizes $C$
over all groups $S$ with $|S| \leq k$
is known to be $\mathcal{NP}$-hard~\cite{chen2016efficient};
we are not aware of any algorithm with
a bounded approximation ratio.

We consider the problem of improving
the group closeness
of a given set $S$ via local search.
More precisely, we consider \emph{exchanges} of vertices from $S$ and $V \setminus S$.
Let $u$ be a vertex in $S$ and $v \in V\setminus S$.
To simplify the presentation, we use the notation
$\exchange{S}vu := (S\setminus\{u\}) \cup \{v\}$ to denote the set
that is constructed by exchanging $u$ and $v$.
We also use the notation $\add{S}v := S \cup \{v\}$ and $\remove{S}u := S \setminus \{u\}$,
to denote
vertex additions and removals, respectively.

Note that, as our algorithms can only perform vertex exchanges,
they require the construction of an initial set $S$ before the algorithms start.
To avoid compromising our algorithms' running times, we cannot afford
a superlinear initialization step.
Thus, in all of our local search algorithms,
we simply choose the initial $S$ uniformly at random.
For large graphs, this initialization can be expected to cover the
graph reasonably well. Exploratory experiments revealed that
other obvious initialization techniques (such as selecting the
$k$ vertices with highest degree) did not improve the
performance of the algorithm.

\subsection{Estimating the Quality of Vertex Exchanges}
\label{sec:dagsize}
It is known that a simple greedy ascent algorithm yields results of
good quality on real-world graphs~\cite{bergamini2018scaling}.
This greedy algorithm starts with an empty set $S$ and iteratively
adds vertices $v \in V \setminus S$ to $S$ that maximize
$f(S) - f(\add{S}v)$. Depending on the input graph and the value
of $k$, however, the greedy algorithm might need to evaluate
the difference in $f$ for a substantial number of vertices
-- this computation is rather expensive for large real-world graphs.

The algorithms in this paper aim to improve upon the running
time of the greedy algorithm.
We achieve this by considering only \emph{local} vertices for $v$,
\ie vertices that are already \enquote{near} $S$.
It is clear that selecting only local vertices would
decrease the quality of a greedy solution (as the greedy
algorithm does not have
the ability to eventually correct suboptimal choices).
However, this is not necessarily
true for our algorithms based on vertex exchanges in Sections~\ref{sec:swapping}
and~\ref{sec:alternating}.

To make our notion of locality more concrete,
let $B_S \subseteq G$ be the DAG constructed by running a SSSP
algorithm (\ie BFS or Dijkstra's algorithm) from the vertices
in $S$. We remark that we work with the full SSSP DAG here (and not a SSSP tree).
Here, the vertices of $S$ are all considered as sources of the SSSP algorithm,
\ie they are at distance zero.
Furthermore, define
\begin{align}
	\nonumber
	\Delta^-(v)
	&:= f(S) - f(\add{S}v)\\
	\label{eq:delta-minus}
	&\phantom{:}= \sum_{x \in V \setminus S} \mathrm{dist}(S, x) - \mathrm{dist}(\add{S}v, x).
\end{align}
To compute the greedy solution, it seems to be necessary to compute
$\Delta^-(v)$ exactly for a substantial number of vertices
$v$.\footnote{The techniques of~\cite{bergamini2018scaling}
can avoid some of the computations; nevertheless, many evaluations
of $\Delta^-(v)$ still have to be performed.}
As discussed above, this seems to be impractical for large graphs.
However, a lower bound for $\Delta^-(v)$ can
be computed from the shortest path DAG $B_S$.
To this end, let $D_v$ be the set of vertices reachable from $v$ in $B_S$.
\begin{lemma}
	\label{lemma:additions}
	It holds that:
	\[ \Delta^-(v) \geq |D_v|\cdot \mathrm{dist}(S, v). \]
	In the unweighted case, equality holds if $v$ is a neighbor of $S$.
\end{lemma}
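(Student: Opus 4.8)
The plan is to reduce everything to a vertex-by-vertex accounting. First I would rewrite $\Delta^-(v)$ using the pointwise identity $\mathrm{dist}(\add{S}v, x) = \min\{\mathrm{dist}(S, x), \mathrm{dist}(v, x)\}$, where $\mathrm{dist}(v, x)$ is the ordinary single-source distance from $v$ in $G$ (i.e. the group distance $\mathrm{dist}(\{v\}, x)$ for the singleton $\{v\}$). Substituting this into the definition~\eqref{eq:delta-minus} expresses $\Delta^-(v)$ as a sum of non-negative per-vertex savings $\sigma(x) := \mathrm{dist}(S, x) - \min\{\mathrm{dist}(S, x), \mathrm{dist}(v, x)\} \geq 0$ over $x \in V \setminus S$.

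For the lower bound I would isolate the contribution of the vertices in $D_v$. Note first that $D_v \subseteq V \setminus S$: every vertex reachable from $v$ in $B_S$ sits at distance at least $\mathrm{dist}(S, v) > 0$ from $S$ and hence is not in $S$. The key structural observation is that for $x \in D_v$ a directed $v$-$x$ path in $B_S$ telescopes: since every arc $(w_{i-1}, w_i)$ of $B_S$ satisfies the tight shortest-path equation $\mathrm{dist}(S, w_i) = \mathrm{dist}(S, w_{i-1}) + \mathrm{weight}(w_{i-1}, w_i)$, summing along the path exhibits a genuine $v$-$x$ walk in $G$ of length exactly $\mathrm{dist}(S, x) - \mathrm{dist}(S, v)$. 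Hence $\mathrm{dist}(v, x) \leq \mathrm{dist}(S, x) - \mathrm{dist}(S, v) \leq \mathrm{dist}(S, x)$, so $\sigma(x) = \mathrm{dist}(S, x) - \mathrm{dist}(v, x) \geq \mathrm{dist}(S, v)$ for every $x \in D_v$. Dropping the remaining non-negative terms and summing over $D_v$ yields $\Delta^-(v) \geq |D_v| \cdot \mathrm{dist}(S, v)$.

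For the equality statement I specialize to unweighted $G$ with $\mathrm{dist}(S, v) = 1$. The triangle inequality $\mathrm{dist}(S, x) \leq \mathrm{dist}(S, v) + \mathrm{dist}(v, x)$ gives the universal lower bound $\mathrm{dist}(v, x) \geq \mathrm{dist}(S, x) - 1$; combined with the upper bound from the previous paragraph this forces $\mathrm{dist}(v, x) = \mathrm{dist}(S, x) - 1$, hence $\sigma(x) = 1$ exactly, for every $x \in D_v$, so the $D_v$ part of the sum equals $|D_v|$. It then remains to show the vertices outside $D_v$ contribute nothing, i.e. $\sigma(x) = 0$ for $x \notin D_v$.

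I expect this last step to be the crux. The argument I would use is contrapositive: in the unweighted setting the bound $\mathrm{dist}(v, x) \geq \mathrm{dist}(S, x) - 1$ leaves only two possibilities, so if $\sigma(x) > 0$ then necessarily $\mathrm{dist}(v, x) = \mathrm{dist}(S, x) - 1$. Concatenating an edge from $S$ to $v$ (length $1$, which exists since $v$ neighbors $S$) with a shortest $v$-$x$ path produces a shortest $S$-$x$ path through $v$ along which the distance to $S$ increases by exactly one at each step; every such edge is tight and is therefore an arc of $B_S$, making the $v$-$x$ portion a directed path in $B_S$ and placing $x \in D_v$ — a contradiction. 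Thus $\sigma(x) = 0$ for $x \notin D_v$ and $\Delta^-(v) = |D_v| = |D_v| \cdot \mathrm{dist}(S, v)$. The reason equality is confined to unweighted neighbors is precisely that this dichotomy fails otherwise: with edge weights, or with $\mathrm{dist}(S, v) > 1$, a vertex $x \notin D_v$ can still be strictly improved by $v$ via a shortest path not realized inside $B_S$.
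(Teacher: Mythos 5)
Your proof is correct and follows essentially the same route as the paper's: a per-vertex decomposition of $\Delta^-(v)$ in which each $x \in D_v$ contributes exactly $\mathrm{dist}(S,v)$ (via tightness of the DAG arcs, which the paper phrases as subpath optimality), remaining terms are dropped as non-negative, and in the unweighted neighbor case the triangle inequality with $\mathrm{dist}(S,v)=1$ shows vertices outside $D_v$ contribute zero. Your only additions are making explicit two steps the paper leaves implicit — that $D_v \subseteq V \setminus S$, and that $\mathrm{dist}(S,x) = \mathrm{dist}(S,v) + \mathrm{dist}(v,x)$ forces $x \in D_v$ because the concatenated shortest path consists of tight edges of $B_S$ — which is sound elaboration, not a different argument.
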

This lemma can be proven from the definition
of $B_S$. A formal proof can be found in Appendix~\ref{apx:proofs}.

The bound of Lemma~\ref{lemma:additions} will be used in the two algorithms in
Sections~\ref{sec:swapping} and ~\ref{sec:alternating}.
Instead of picking vertices that maximize $\Delta^-$, those
algorithms pick vertices that maximize the right-hand side of Lemma~\ref{lemma:additions},
\ie $|D_v|\cdot \dist{S, v}$.
The bound is local in the sense that it is more accurate for vertices
near $S$: in particular, the reachability sets of
vertices in $(V \setminus S) \cap N(S)$ are larger in $G$ than
those in $B$, as $B$ does not contain back edges.
Unfortunately, computing the size of $D_v$ exactly for all $v$
still seems to be
prohibitively expensive: indeed, the fastest known algorithm to compute
the size of the transitive closure of a DAG (= $\sum |D_v|$)
relies on iterated (Boolean) matrix multiplication
(hence, the best known exact algorithm has a complexity
of $\mathcal{O}(n^{2.37})$~\cite{le2014powers}).
However, it turns out that randomized algorithms
can be used to approximate the sizes of
$D_v$ for all $v$ \emph{at the same time}.
We employ the randomized algorithm of Cohen~\cite{cohen1997size} for this task.
In multiple iterations, this algorithm samples a random number for each vertex
of the graph $G$, accumulates in each vertex $v$ the minimal
random number of any vertex reachable from $v$, and
estimates $|D_v|$ based on this information.

We remark that since Cohen's algorithm yields an approximation, but not a lower bound for the
right-hand side of Lemma~\ref{lemma:additions}, the inequality of the Lemma
can be violated in our algorithms; in particular, it can happen that our algorithms pick a vertex
$v$ such that $\Delta^-(v) < 0$. In this case, instead of decreasing the closeness centrality
of our current group, our algorithms terminate. Nevertheless, our experiments
demonstrate that on real-world instances, a coarse approximation of the reachability
set size is enough for Lemma~\ref{lemma:additions} to yield useful candidates
for vertex exchanges (see Section~\ref{sub:reachability_engineering}
and Appendix~\ref{apx:samples_exp}).

\subsection{Local-Swaps Algorithm}
\label{sec:swapping}

\begin{algorithm}[t]
\caption{Overview of Local-Swaps Algorithm}
\label{algo:swaps}
\begin{algorithmic}[1]
  \Repeat
  	\State approximate $|D_v|$ for all $V \setminus S$
	\State \label{line:pick-swap} $(u, v) \gets
		\argmax_{u \in S, v \in N(u) \setminus S} |D_v| - |\Lambda_u|$
    \State $S \gets \exchange{S}vu$
	\State \label{line:recompute-swap} run pruned BFS from $v$ \Comment{to recompute $f(S)$}
  \Until{previous iteration did not decrease $f(S)$}
\end{algorithmic}
\end{algorithm}

Let us first focus on unweighted graphs.
To construct a fast local search algorithm,
a straightforward idea is to allow \emph{swaps} between
vertices in $S$ and their neighbors in $V\setminus S$.
This procedure can be repeated until no swap can decrease $f(S)$.
Let $u \in S$ be a vertex of the group and let $v \in N(S) \setminus S$ be
one of its neighbors outside of the group.
To determine whether swapping $u$ and $v$ (\ie replacing $S$ by $\exchange{S}vu$)
is beneficial, we have to check whether $f(S) - f(\exchange{S}vu) > 0$,
\ie whether the farness is decreased by the swap.
The challenge here is to find a pair $u, v$ that satisfies this inequality
(without checking all pairs $u, v$ exhaustively)
and to compute the difference $f(S) - f(\exchange{S}vu)$ quickly.
Note that a crucial ingredient that allows us to construct an
efficient algorithm is that the distance of $S$ to every vertex $x$
can only change by $\pm 1$ when doing a swap.
Hence, we only have to count the numbers of vertices where
the distance changes by $-1$ and the number of vertices
where it changes by $+1$.

Our algorithm requires a few auxiliary data structures to compute $f(S) - f(\exchange{S}vu)$.
In particular, we store the following:
\begin{itemize}
	\item the distance $\dist{S, x}$ from $S$ to all vertices $x \in V \setminus S$,
	\item a set $\lambda_x := \{w \in S : \dist{S, x} = \dist{w, x}\}$
		for each $x \in V \setminus S$
		that contains all vertices in $S$ that realize
		the shortest distance from $S$ to $x$,
	\item the value $|\Lambda_w|$ for each $w \in S$,
		where $\Lambda_w := \{x \in V \setminus S : \lambda_x = \{w\}\}$
		is the set of vertices for which the shortest distance is
		realized exclusively by $w$.
\end{itemize}

Note that the sets $\lambda_x$ consume $\mathcal{O}(k |V|)$ memory in total.
However, since $k \ll |V|$, this can be afforded even for large real-world graphs.
In our implementation, we store each $\lambda_x$ in only $k$ bits.

All of those auxiliary data structures can be
maintained dynamically during the entire algorithm
with little additional overhead.
More precisely, after a $u$-$v$ swap is done,
$v$ is added to all $\lambda_x$ satisfying $\dist{v, x} = \dist{S, x}$;
for $x \in V \setminus S$ that satisfy $\dist{v, x} < \dist{S, x}$,
the set $\{v\}$ replaces $\lambda_x$.
$u$ can be removed from all $\lambda_x$ by a linear scan through
all $x \in V \setminus S$.

Algorithm~\ref{algo:swaps} states a high-level overview of the algorithm.
In the following, we discuss how to pick a good swap (line~\ref{line:pick-swap}
of the pseudocode)
and how to update the data structures after a swap (line~\ref{line:recompute-swap}).
The running time of the algorithm is dominated by the initialization
of $\lambda_x$. Thus, it runs in $\mathcal{O}(k |V| + |E|)$ time for each update.

\subsubsection{Choosing a Good Swap}
\label{sec:choose-swap}
Because it would be too expensive to compute the exact difference in $f$ for each
possible swap, we find the pair of vertices $(u, v)$
with $u \in S$, $v \in N(v) \setminus S$ that
maximizes $|D_v|\cdot \dist{S, v} - |\Lambda_u| = |D_v| - |\Lambda_u|$.
Note that this value is a lower bound for the
decrease of $f$ after swapping $u$ and $v$:
In particular, Lemma~\ref{lemma:additions} implies that
$|D_v|$ is a lower bound for the decrease in farness
when adding $v$ to $S$. Additionally, $|\Lambda_u|$ is an upper bound
for the increase in farness when removing $u$ from $S$
(and hence also for the increase in farness when removing $u$ from $\add{S}v$).\footnote{
Note, however, that this bound is trivial
if $|D_v| - |\Lambda_u| \leq 0$.}
Thus, we can expect this strategy to yield pairs of vertices
that lead to a decrease of $f$.
To maximize $|D_v| - |\Lambda_u|$,
for each $v \in V \setminus S$, we
compute the neighbor $u \in N(v) \cap S$ that minimizes
$|\Lambda_u|$ (in $\mathcal{O}(|V| + |E|)$ time). Afterwards, we can maximize
$|D_v| - |\Lambda_u|$ by a linear scan over all $v \in V \setminus S$.

\subsubsection{Computing the Difference in Farness}
Instead of comparing distances, it is sufficient to
define sets of vertices whose distance to $S$ is increased or decreased
(by 1) by the swap:
\begin{align*}
	H^+_{u,v} &:= \{x \in V : \dist{S, x} < \dist{\exchange{S}vu, x} \},\\
	H^-_{u,v} &:= \{x \in V : \dist{S, x} > \dist{\exchange{S}vu, x} \}
\end{align*}
As $\dist{S, x} - \dist{\exchange{S}vu, x} \in \{-1,0,+1\}$, it holds that:
\begin{lemma}
	$f(S) - f(\exchange{S}vu) = |H^-_{u,v}| - |H^+_{u,v}|$
\end{lemma}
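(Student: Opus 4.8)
The plan is to sidestep the apparent difficulty that $f(S)$ and $f(\exchange{S}vu)$ are sums over \emph{different} index sets ($V \setminus S$ versus $V \setminus \exchange{S}vu$) by extending both sums to range over all of $V$. Since every vertex of a group is at distance $0$ from that group, we have $\dist{S,x}=0$ for $x \in S$ and $\dist{\exchange{S}vu,x}=0$ for $x \in \exchange{S}vu$; hence $f(S)=\sum_{x \in V}\dist{S,x}$ and $f(\exchange{S}vu)=\sum_{x \in V}\dist{\exchange{S}vu,x}$, the in-group vertices simply contributing nothing. Once both farness values are written as sums over the \emph{same} set $V$, the swap of $u$ and $v$ no longer perturbs the index set and a term-by-term comparison becomes available.

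First I would subtract the two expressions to obtain $f(S)-f(\exchange{S}vu)=\sum_{x \in V}\bigl(\dist{S,x}-\dist{\exchange{S}vu,x}\bigr)$. Next I would invoke the fact (stated immediately before the lemma) that each summand $\dist{S,x}-\dist{\exchange{S}vu,x}$ lies in $\{-1,0,+1\}$. By the definitions of $H^+_{u,v}$ and $H^-_{u,v}$, a summand equals $+1$ exactly for $x \in H^-_{u,v}$, equals $-1$ exactly for $x \in H^+_{u,v}$, and equals $0$ for every other $x$. Grouping the terms by their value then yields $\sum_{x \in V}\bigl(\dist{S,x}-\dist{\exchange{S}vu,x}\bigr)=|H^-_{u,v}|-|H^+_{u,v}|$, which is the claim.

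The only real subtlety is the index-set mismatch noted above: a naive subtraction leaves the boundary terms $\dist{S,v}$ (with $v \notin S$) and $\dist{\exchange{S}vu,u}$ (with $u \notin \exchange{S}vu$) unmatched, and tracking them by hand is error-prone. Folding the in-group vertices back into the sum via the distance-$0$ convention makes these terms cancel automatically, so I expect this reformulation — rather than any hard estimate — to be the crux. For completeness I would also remark that the $\{-1,0,+1\}$ bound is exactly what lets the difference collapse to a \emph{plain} signed count of $H^-_{u,v}$ and $H^+_{u,v}$, as opposed to a distance-weighted count: writing $T := \remove{S}u$ we have $\dist{S,x}=\min(\dist{T,x},\dist{u,x})$ and $\dist{\exchange{S}vu,x}=\min(\dist{T,x},\dist{v,x})$, and since $u$ and $v$ are adjacent (unit edge weight) $|\dist{u,x}-\dist{v,x}|\le 1$, so the $1$-Lipschitz property of $a \mapsto \min(\dist{T,x},a)$ bounds the difference by $1$. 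As the excerpt already asserts this bound, however, I would simply use it.
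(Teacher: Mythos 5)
Your proof is correct and takes essentially the same route as the paper: the paper states this lemma as an immediate consequence of the observation that $\dist{S, x} - \dist{\exchange{S}vu, x} \in \{-1,0,+1\}$ for every $x \in V$, which is exactly the term-by-term counting argument you carry out. Your two additions --- extending both farness sums over all of $V$ via the distance-zero convention to resolve the index-set mismatch, and justifying the $\pm 1$ bound through the $1$-Lipschitz property of $a \mapsto \min(\dist{\remove{S}u, x}, a)$ together with the adjacency of $u$ and $v$ --- are careful elaborations of details the paper leaves implicit, not a different approach.
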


Fortunately, computing $H^-_{u,v}$ is straightforward: this can be done by running
a BFS rooted at $v$; the BFS simply counts those vertices $x$
for which $\dist{v, x} < \dist{S, x}$.
Hence, to check this condition, we have to
store the values of $\dist{S, x}$ for all $x \in V$.
We remark that it is not necessary to run a full BFS:
indeed, we can prune the search once $\dist{v, x} \geq \dist{S, x}$
(\ie if the BFS is about to visit a vertex $x$ satisfying this condition,
the search continues without visiting $x$).
However, as we will see in the following, it makes sense to sightly relax
the pruning condition and only prune
the BFS if $\dist{v, x} > \dist{S, x}$; this allows us to update
our auxiliary data structures on the fly.

$H^+_{u,v}$ can be computed from $|\Lambda_u|$ with the help
of the auxiliary data structures.
We note that $H^+_{u,v} \subseteq \Lambda_u$,
as only vertices $x$ where $\dist{S, x}$ is uniquely realized
by $u$ (out of all vertices in the group) can have their distance from $S$ increased
by the swap.
As $H^+_{u,v} \cap H^-_{u,v} = \emptyset$,
we can further restrict this inclusion to
$H^+_{u,v} \subseteq \Lambda_u \setminus H^-_{u,v}$,
but, in general, $\Lambda_u \setminus H^-_{u,v}$ will consist of more vertices
than just $H^+_{u,v}$. More precisely, $\Lambda_u \setminus H^-_{u,v}$ can be partitioned
into $\Lambda_u \setminus H^-_{u,v} = H^+_{u,v} \cup L^0_{u,v}$,
where $L^0_{u,v} := \{x \in \Lambda_u : \dist{u, x} = \dist{v, x}\}$ consists
only of vertices whose distance is neither increased nor decreased by the swap.
By construction, $L^0_{u,v}$ and $H^+_{u,v}$ are disjoint.
This proves that the following holds:
\begin{lemma}
	$|H^+_{u, v}| = |\Lambda_u| - |\Lambda_u \cap H^-_{u,v}| - |L^0_{u,v}|$.
\end{lemma}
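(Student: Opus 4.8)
The plan is to establish the claimed partition of the set $\Lambda_u \setminus H^-_{u,v}$ directly from the definitions, and then translate it into the cardinality identity by disjointness. The target statement asserts that
\[
	|H^+_{u,v}| = |\Lambda_u| - |\Lambda_u \cap H^-_{u,v}| - |L^0_{u,v}|,
\]
so I would first observe that this is exactly what we obtain by counting the elements of $\Lambda_u$ in three disjoint classes according to how the swap affects their distance to the group.

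First I would fix a vertex $x \in \Lambda_u$ and ask how $\dist{S, x}$ compares to $\dist{\exchange{S}vu, x}$. Since $x \in \Lambda_u$, the shortest distance from $S$ to $x$ is realized \emph{only} by $u$, so $\dist{S, x} = \dist{u, x}$ and every other $w \in S$ satisfies $\dist{w, x} > \dist{u, x}$. When we remove $u$ and add $v$, the distance from the new group is $\dist{\exchange{S}vu, x} = \min(\dist{v, x}, \min_{w \in S \setminus \{u\}} \dist{w, x})$. The second term is strictly larger than $\dist{u, x} = \dist{S, x}$, so the behaviour is governed entirely by $\dist{v, x}$ relative to $\dist{u, x}$. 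This gives a clean trichotomy: if $\dist{v, x} < \dist{u, x}$ the distance strictly decreases (so $x \in H^-_{u,v}$); if $\dist{v, x} = \dist{u, x}$ it stays the same (so $x \in L^0_{u,v}$ by the definition of $L^0_{u,v}$); and if $\dist{v, x} > \dist{u, x}$ it strictly increases (so $x \in H^+_{u,v}$). I would then note the converse inclusion $H^+_{u,v} \subseteq \Lambda_u$, already justified in the excerpt, to confirm that every element of $H^+_{u,v}$ falls into this third class rather than arising from vertices outside $\Lambda_u$.

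Next I would assemble these observations into the set identity $\Lambda_u = (\Lambda_u \cap H^-_{u,v}) \cup L^0_{u,v} \cup H^+_{u,v}$, with the three pieces pairwise disjoint since each vertex lies in exactly one case of the trichotomy. Taking cardinalities and using additivity over a disjoint union yields
\[
	|\Lambda_u| = |\Lambda_u \cap H^-_{u,v}| + |L^0_{u,v}| + |H^+_{u,v}|,
\]
which rearranges immediately to the claimed formula. The only subtlety worth stating carefully is that $L^0_{u,v}$ and $H^+_{u,v}$ are genuinely subsets of $\Lambda_u$ (not merely of $V \setminus S$), so that the decomposition accounts for all of $\Lambda_u$ and nothing more.

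I do not expect a serious obstacle here; the lemma is essentially a bookkeeping consequence of the fact that a single swap changes each distance by at most one and that, for vertices in $\Lambda_u$, the old distance was pinned down uniquely by $u$. The one point requiring care is ensuring the trichotomy is exhaustive and that no element of $\Lambda_u$ is double-counted across $H^-_{u,v}$, $L^0_{u,v}$, and $H^+_{u,v}$ — this follows because the comparison $\dist{v, x}$ versus $\dist{u, x}$ admits exactly three mutually exclusive outcomes. Once disjointness and exhaustiveness are checked, the cardinality identity is immediate.
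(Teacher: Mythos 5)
Your proof is correct and follows essentially the same route as the paper: both arguments rest on the containment $H^+_{u,v} \subseteq \Lambda_u$ and the disjoint decomposition $\Lambda_u = (\Lambda_u \cap H^-_{u,v}) \cup L^0_{u,v} \cup H^+_{u,v}$, from which the cardinality identity follows by additivity. The only difference is that you explicitly derive the exhaustive trichotomy on $\mathrm{dist}(v,x)$ versus $\mathrm{dist}(u,x)$ (using that $u$ uniquely realizes $\mathrm{dist}(S,x)$ for $x \in \Lambda_u$), a step the paper asserts more tersely \enquote{by construction}.
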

We note that $L^0_{u,v}$ (and also $\Lambda_u \cap H^-_{u,v}$)
is completely visited by our BFS. To determine
$|L^0_{u,v}|$, the BFS only has to count the vertices $x$ that satisfy
$\dist{v,x} = \dist{S,x}$ and $\lambda_x = \{u\}$.
On the other hand, to determine $|\Lambda_u \cap H^-_{u,v}|$, it has to count
the vertices $x$ satisfying $\dist{v,x} < \dist{S,x}$ and $\lambda_x = \{u\}$.

\subsection{Grow-Shrink Algorithm}
\label{sec:alternating}

\begin{algorithm}[t]
\caption{Overview of Grow-Shrink Algorithm}
\label{algo:grow-shrink}
\begin{algorithmic}[1]
	\Repeat
		\State \label{line:grow-start} approximate $|D_v|$ for all $v \in V \setminus S$
		\State $v \gets \argmax_{v \in V \setminus S} |D_v|\cdot d(v)$
		\State $S \gets \add{S}v$
		\State \label{line:grow-end} run pruned BFS from $v$ \Comment{to recompute $f(S)$, $d$, $d'$}
		\State \label{line:shrink-start} $u \gets \argmin_{u \in S} \sum_{x \in R_u} d'(x) - d(x)$
		\State $S \gets \remove{S}u$
		\State \label{line:shrink-end} \label{line:dijkstra} run Dijkstra-like algo. \Comment{to recompute $f(S)$, $d$, $d'$}
	\Until{previous iteration did not decrease $f(S)$}
\end{algorithmic}
\end{algorithm}

The main issue with the swapping algorithm from Section~\ref{sec:swapping}
is that it can only exchange a vertex $u \in S$
with one of its neighbors $v \in N(u) \setminus S$.
Due to this behavior, the algorithm might take a long time to converge
to a local optimum. It also makes it hard to escape a local optimum:
indeed, the algorithm will terminate if no swap with a neighbor
can improve the closeness.

Our second algorithm lifts those limitations.
It also allows $G$ to be a weighted graph.
In particular, it allows vertex exchanges that change the
distances from $S$ to the vertices in $V \setminus S$ by
arbitrary amounts.
Computing the exact differences $f(S) - S(\exchange{S}vu)$
for all possible pairs of $u$ and $v$ seems to be impractical
in this setting.
Hence, we decompose the vertex exchange of $u$ and $v$
into two operations: the addition of $v$ to $S$ and the removal of $u$ from $\add{S}v$.
In particular, we allow the set $S$ to grow to a size of $k + 1$
before we shrink the size of $S$ back to $k$. Thus, the cardinality
constraint $|S| \leq k$ is temporarily violated; eventually,
the constraint is restored again.
Fortunately, the individual differences $f(S) - f(\add{S}v)$
and $f(\add{S}v) - f(\exchange{S}vu)$
(or bounds for those differences)
turn out to be efficiently computable
for all possible $u$ and $v$,
at least in approximation. We remark, however, that while this
technique does find the vertex that maximizes $f(S) - f(\add{S}v)$
and the vertex that maximizes $f(\add{S}v) - f(\exchange{S}vu)$,
it does \emph{not} necessarily find the \emph{pair} of vertices maximizing
$f(S) - f(\exchange{S}vu)$.
Nevertheless, our experiments in Section~\ref{sec:experiments} demonstrate
that the solution quality of this algorithm is superior to the quality of the
local-swaps algorithm.

In order to perform these computations, our algorithm maintains
the following data structures:
\begin{itemize}
	\item the distance $d(x)$ of each vertex $x \notin S$ to $S$,
		and a representative $r(x) \in S$ that realizes this
		distance, \ie it holds that $\mathrm{dist}(S, x) = \mathrm{dist}(r(x), x) = d(x)$,
	\item the distance $d'(x)$ from $S \setminus \{r(x)\}$ to $x$
		and representative $r'(x)$ for this distance (satisfying
		the analogous equality).
\end{itemize}

Since the graph is connected, these data structures are well-defined
for all groups $S$ of size $|S| \geq 1$. Furthermore, the difference
between $d'(x)$ and $d(x)$ yields exactly the difference in farness
when $r(x)$ is removed from the $S$. Later, we will use this fact
to quickly determine differences in farness.

We remark that it can happen that $d(x) = d'(x)$; nevertheless,
$r(x)$ and $r'(x)$ are always
distinct. Indeed, there can be two
distinct vertices $r(x)$ and $r'(x)$ in $S$ that
satisfy $\mathrm{dist}(r(x), x) = \mathrm{dist}(r'(x), x) = \mathrm{dist}(S, x)$.
With $R_u$ and $R'_u$, we denote the set of vertices
$x \in V \setminus S$ with $r(x) = u$ and $r'(x) = u$, respectively.

Algorithm~\ref{algo:grow-shrink} gives a high-level overview of the algorithm.
In the following two subsections, we discuss the growing phase
(line \ref{line:grow-start}-\ref{line:grow-end})
and the shrinking phase (line \ref{line:shrink-start}-\ref{line:shrink-end}) individually.
The running time of Grow-Shrink is dominated by the Dijkstra-like algorithm
(in line~\ref{line:dijkstra}). Therefore, it runs in $\mathcal{O}(|V| + |E| \log |V|)$ time
per update
(when using an appropriate priority queue).
The space complexity is $\mathcal{O}(|V| + |E|)$.

\subsubsection{Vertex additions}
\label{sec:growing}
When adding a vertex $v$ to $S$,
we want to select $v$ such that $f(\add{S}v)$ is minimized.
Note that minimizing $f(\add{S}v)$ is equivalent to maximizing the difference
$f(S) - f(\add{S}v) = \Delta^-(v)$.
Instead of maximizing $\Delta^-(v)$, we maximize the lower
bound $|D_v| \cdot \dist{S, v}$.
We perform a small number of iterations of the
reachability set size approximation algorithm (see Section~\ref{sec:dagsize}) to select
the vertex $v$ with (approximatively) largest $|D_v|$.

After $v$ is selected, we perform a BFS from $v$ to compute $\Delta^-(v)$ exactly.
As we only need to visit the vertices whose distance to $\add{S}v$ is
smaller than to $S$, this BFS can be pruned
once a vertex $x$ is reached with $\mathrm{dist}(S, x) < \mathrm{dist}(v, x)$.
During the BFS, the values of $d, d', r, r'$ are updated to reflect the
vertex addition: the only thing that can happen here is that $v$
realizes either of the new distances $d$ or $d'$.

\subsubsection{Vertex removals}
\label{sec:shrinking}
For vertex removals, we can efficiently calculate the exact increase
$\Delta^+(u) := f(\remove{S}u) - f(S)$ in farness for all vertices $u \in S$,
even without relying on approximation.
In fact, $\Delta^+(u)$ is given as:
\[ \Delta^+(u) = \sum_{x \in R_u} d'(x) - d(x). \]
We need to compute $k$ such sums (\ie $\Delta^+(u)$ for each $u \in S$);
but they can all be computed at the same time by a single linear scan through all
vertices $x \in V$.

\begin{figure}
\centering

\begin{tikzpicture}
\small

\coordinate (r1) at (2,2.4);
\coordinate (r2) at (6.25,2.85);
\coordinate (r3) at (8,2.8);

\draw[cyan, rounded corners=30pt,dashed,thick]
  (0,1.1) rectangle ++(4,2.2);
\draw[orange, rounded corners=40pt,dashed,thick]
  (4,1.3) rectangle ++(3.5,3);
\node[cyan, draw, shape=rectangle, minimum width=2.55cm, minimum height=1.5cm, anchor=center,rounded corners=20pt,thick] at (r1) {};
\node [orange, draw, shape=rectangle, minimum width=2.5cm, minimum height=2.0cm, anchor=center,rounded corners=25pt,thick] at  (r2) {};
\node [cyan,draw, shape=rectangle, minimum width=1cm, minimum height=1.3cm, anchor=center,rounded corners=10pt,
thick] at (r3) {};

\draw (r1) circle [radius=3pt] coordinate (w);
\draw (w) node [below=.1cm] {$w$};
\draw (r1) ++(.75,.3) circle [radius=3pt] coordinate(w1);
\draw (w1) ++(.85,-.3) circle [radius=3pt] coordinate(w2);
\draw (w2) ++(.95,0.4) circle [radius=3pt] coordinate(w3);

\path (w) ++(30:3pt) coordinate (pw1);
\path (w1) ++(180:3pt) coordinate (pw2);
\path (w1) ++(0:3pt) coordinate (pw3);
\path (w2) ++(160:3pt) coordinate (pw4);
\path (w2) ++(0:3pt) coordinate (pw5);
\path (w3) ++(185:3pt) coordinate (pw6);

\draw[-] (pw1) -- (pw2);
\draw[-] (pw3) -- (pw4);
\draw[densely dotted, ultra thick] (pw5) -- (pw6);

\draw (r2) circle [radius=3pt];
\draw (r2) node [above=.1cm] {$u$};
\draw (r2) ++(.75,-.45) circle [radius=3pt] coordinate (u1);
\draw (r3) circle [radius=3pt];
\draw (r3) node [below=.1cm] {$w'$};

\path (r2) ++(325:3pt) coordinate (pu1);
\path (u1) ++(145:3pt) coordinate (pu2);
\path (u1) ++(30:3pt) coordinate (pu3);
\path (r3) ++(190:3pt) coordinate (pu6);

\draw[-] (pu1) -- (pu2);
\draw[dashed, ultra thick] (pu3) -- (pu6);

\coordinate (l1) at (.1, 4.2);
\coordinate (l2) at (.55, 4.2);
\coordinate (l3) at (.1, 3.9);
\coordinate (l4) at (.55, 3.9);
\draw[densely dotted, ultra thick] (l1) -- (l2);
\draw[dashed, ultra thick] (l3) -- (l4);
\node [right=.1cm of l2,align=left] {$d'$-boundary};
\node [right=.1cm of l4,align=left] {$d$-boundary};
\end{tikzpicture}
\caption{$w$, $u$ and $w'$ are vertices in $S$.
Vertices within the solid regions
belong to $R_w$, $R_u$ and $R_{w'}$, respectively.
Vertices within the dashed regions belong
to $R'_w$ and $R'_u$, respectively. After removing $u$ from $S$,
the vertices in $R'_u$ will have an invalid $r'$ and $d'$.}
\label{fig:boundary_nodes}
\end{figure}
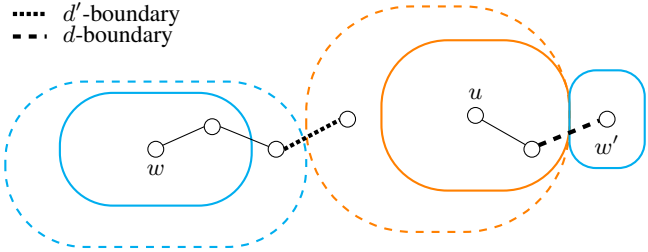

It is more challenging, however, to update $d, d', r$ and $r'$
after removing a vertex $u$ from $S$.
For vertices $x$ with an invalid $d(x)$ (\ie vertices in $R_u$),
we can simply update $d(x) \gets d'(x)$ and $r(x) \gets r'(x)$.
This update invalidates $d'(x)$ and $r'(x)$. In the
following, we treat $d'(x)$ as infinite and $r'(x)$ as
undefined for all updated vertices $x$; eventually those expressions
will be restored to valid values using the algorithm that we
describe in the remainder of this section.
Indeed, we now have to handle all vertices with an invalid $d'(x)$
(\ie those in $R_u \cup R'_u$).
This computation is more involved. We run a
Dijkstra-like algorithm (even in the unweighted case)
to fix $d'(x)$ and $r'(x)$.
The following definition yields the starting points for our
Dijkstra-like algorithm.

\begin{definition}
	Let $x \in V$ be any vertex and let $y \in N(x) \cap (R_u \cup R'_u)$
	be a neighbor of $x$ that needs to be updated.
	\begin{itemize}
		\item We call $(x, y)$ a \emph{$d$-boundary pair for $y$} iff $r(x) \neq r(y)$.
			In this case, we set $b(x, y) := d(x) + \mathrm{dist}(x, y)$.
		\item We call $(x, y)$ a \emph{$d'$-boundary pair for $y$} iff
			$r(x) = r(y)$ and $x \notin R_u \cup R'_u$.
			In this case, we set $b(x, y) := d'(x) + \mathrm{dist}(x, y)$.
	\end{itemize}
	In both cases, $b(x, y)$ is called the \emph{boundary distance}
	of $(x, y)$.
\end{definition}

The definition is illustrated in Figure~\ref{fig:boundary_nodes}.
Intuitively, boundary pairs define the boundary between regions of $G$ that
have a valid $d'(x)$ (blue regions in Figure~\ref{fig:boundary_nodes})
and regions of the graph that have an invalid $d'(y)$
(orange region in Figure~\ref{fig:boundary_nodes}).
The boundary distance $b(x, y)$
corresponds to the value of $d'$ that a SSSP algorithm could
propagate from $x$ to $y$.
We need to distinguish $d$-boundary pairs and $d'$-boundary pairs
as the boundary distance can either be propagated on a shortest path
from $S$ over $x$ to $y$ (in case of a $d$-boundary pair) or on a shortest
path from $\remove{S}{r(x)}$ over $x$ to $y$ (in case of a $d'$-boundary pair).

Consider all $y \in V \setminus S$ such that
there exists at least one \mbox{($d$- or $d'$-)}boundary pair for $y$.
For each such $y$,
let $(x, y)$ be the boundary
pair with minimal boundary distance $b(x, y)$.
Our algorithm
first determines all such $y$ and updates ${d'(y) \gets b(x, y)}$.
If $(x, y)$ is a $d$-boundary pair, we set $r(y) \gets r(x)$;
for $d'$-boundary pairs, we set $r(y) \gets r'(x)$.
After this initial update, we run a Dijkstra-like algorithm starting from
these vertices $y$ for which a boundary pair exists.
This algorithm treats $d'$ as the distance.
Compared to the usual Dijkstra algorithm, our algorithm needs the following
modifications:
For each vertex $x$, our algorithm only visits those neighbors $y$
that satisfy $r(y) \neq r'(x)$. Furthermore, whenever such a visit
results in an update of $d'(y)$,
we propagate $r'(y) \gets r'(x)$.
Note that these conditions imply that we never update $r'(y)$ such
that $r'(y) = r(y)$.

\begin{lemma}
	\label{lemma:dijkstra}
	After the Dijkstra-like algorithm terminates, $d'$ and $r'$ are correct.
\end{lemma}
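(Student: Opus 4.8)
Let $S' = S\setminus\{u\}$ denote the group after the removal. I would first observe that the claim only concerns the vertices whose $d'$ was invalidated, namely those in $R_u \cup R'_u$: for every $x \notin R_u \cup R'_u$ we have $r(x)\neq u$ and $r'(x)\neq u$, so both realizing sources survive in $S'$; combined with $S'\subseteq S$ this gives $d(x)=\dist{S', x}$ and $d'(x)=\dist{S'\setminus\{r(x)\}, x}$, i.e. these values are already correct with respect to $S'$. Likewise the preliminary fixup $d(x)\gets d'(x)$, $r(x)\gets r'(x)$ on $R_u$ makes $d$ and $r$ correct everywhere, so $r(y)$ may be treated as the (correct) closest representative in $S'$ throughout the Dijkstra-like phase. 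It therefore remains to show that for each $y\in R_u\cup R'_u$ the algorithm terminates with $d'(y)=\dist{S'\setminus\{r(y)\}, y}$ and an $r'(y)$ realizing this distance.

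\textbf{Soundness direction.} I would prove by induction over the sequence of assignments (the frontier initialization followed by each relaxation) that every value the algorithm writes into $d'(y)$ equals the length of a concrete walk from $S'$ to $y$ whose origin $s$ satisfies $s\neq r(y)$, and that the stored $r'(y)$ equals this origin. The boundary-pair definition is exactly what makes this hold at the frontier: a $d$-boundary pair extends a shortest path from $S'$ to $x$ whose origin $r(x)\neq r(y')$, and a $d'$-boundary pair (where $r(x)=r(y')$) instead extends a shortest path from $S'\setminus\{r(x)\}$ whose origin $r'(x)\neq r(x)=r(y')$; in both cases the propagated origin differs from $r(y')$. The relaxation rule preserves the invariant, since it only visits a neighbor $y$ with $r(y)\neq r'(x)$ and then sets $r'(y)\gets r'(x)$. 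Consequently $d'(y)\ge \dist{S'\setminus\{r(y)\}, y}$ and $r'(y)\neq r(y)$ always.

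\textbf{Optimality direction.} For the reverse inequality I would run the standard Dijkstra finalization argument, treating the frontier initialization as virtual source edges and processing vertices of $R_u\cup R'_u$ in nondecreasing $d'$-order. Given a true shortest path $P$ from some $s^\ast\in S'\setminus\{r(y)\}$ to $y$, let $x$ be its last vertex outside $R_u\cup R'_u$ and $y'$ its successor, so that the suffix $P[y'..y]$ stays inside the invalidated region. Splitting on whether $r(x)=r(y')$, I would show $(x,y')$ is a boundary pair whose boundary distance equals the length of $P[s^\ast..y']$: in the $r(x)\neq r(y')$ case the prefix realizes $\dist{S', x}=d(x)$, and in the $r(x)=r(y')$ case $P$ is forced to reach $x$ from a source other than $r(x)$, so the prefix realizes $\dist{S'\setminus\{r(x)\}, x}=d'(x)$. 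Standard Dijkstra optimality, together with the relaxation rule along the in-region suffix, then yields $d'(y)\le |P|$, closing the induction.

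\textbf{Main obstacle.} The delicate point is that the algorithm only ever routes paths whose carried origin differs from $r(w)$ at \emph{every} intermediate vertex $w$, whereas the quantity $\dist{S'\setminus\{r(y)\}, y}$ merely forbids the origin $r(y)$ at the \emph{endpoint}. A genuine shortest path for $d'(y)$ can pass through an intermediate $v_j$ with $r(v_j)$ equal to the path's origin, which the algorithm declines to route through when $d'(v_j)>d(v_j)$. The heart of the proof is therefore an exchange argument showing that any such shortest path can be rerouted, at no increase in length, into one the algorithm can follow: at the vertex where the dominating source changes, one switches to the second-closest representative using the inequality $d'(v_t)\le \dist{s^\ast, v_t}$, which holds precisely because $s^\ast$ is an admissible source for $v_t$'s second-closest distance. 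Establishing this identification of the algorithm's constrained distances with the intended second-closest distances is where essentially all the work lies; once it is in place, the $\le$ direction and the correctness of every $r'(y)$ follow from the usual Dijkstra invariant.
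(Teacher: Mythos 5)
Your overall skeleton --- a soundness invariant showing that every value written into $d'(y)$ is the length of a genuine path from some source distinct from $r(y)$, plus a Dijkstra-order induction for the reverse inequality --- matches the structure of the paper's proof, and you correctly isolate the crux: the pruning rule $r(y) \neq r'(x)$ constrains the carried origin at \emph{every} intermediate vertex, while the target quantity only excludes $r(y)$ at the endpoint. However, there is a genuine gap at exactly this point, in two places. First, your boundary-pair claim at the entry into $Z = R_u \cup R'_u$ fails in one sub-case: if the origin $s^\ast$ of the true shortest path $P$ satisfies $s^\ast = r(x) = r(y')$ --- which is not excluded, since you only know $s^\ast \neq r(y)$, and $x, y'$ are intermediate vertices (\eg $y'$ can lie in $R'_u$ with $r(y') = s^\ast$ intact and only $r'(y') = u$ invalidated) --- then $(x,y')$ is only a $d'$-boundary pair, and its boundary distance $d'(x) + \dist{x,y'}$ can strictly exceed $|P[s^\ast..y']|$, because the prefix of $P$ only witnesses $d(x) = \dist{s^\ast,x} \leq d'(x)$; your assertion that $P$ is \emph{forced} to reach $x$ from a source other than $r(x)$ is unjustified. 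Second, the \enquote{exchange argument} you defer is the entire content of the lemma: after switching representatives at $v_t$, the new carried origin $r'(v_t)$ may again coincide with $r(\cdot)$ of a later vertex on the path, so the rerouting recurses, and nothing in your sketch shows it closes with the required length bound. As written, the $\leq$ direction (and with it the claim that no $r'$ remains undefined, which the paper verifies separately) is not established.

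The paper closes this hole with a different decomposition that makes origin-tracking unnecessary. For each fixed $w \in S$ it inducts on the true value $\dist{\remove{S}w, y}$ over $y \in R_w$ and splits on the representative of a shortest-path predecessor $x$ of $y$ --- not on where the path enters $Z$: if $r(x) \neq w$, then $(x,y)$ is a $d$-boundary pair \emph{even when} $x \in Z$ (after the preliminary fixup $d(x) \gets d'(x)$, $r(x) \gets r'(x)$ on $R_u$, the first distance $d$ is valid everywhere), so the initialization already sets $d'(y)$ to the correct value; if instead $x \in R_w$, the invariant $r'(x) \neq r(x) = w = r(y)$ guarantees the visit from $x$ to $y$ is never pruned, and the relaxation propagates the inductively correct $d'(x) = \dist{\remove{S}w, x}$ (via a $d'$-boundary pair if $x \notin Z$). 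In other words, every change of representative along the path is absorbed by a $d$-boundary pair that restarts the propagation from the set-distance $d$, so no fixed carried origin ever has to cross a vertex whose $r$-value collides with it --- the obstacle you flag simply never arises. To repair your write-up, replace \enquote{last vertex outside $Z$} by \enquote{last vertex whose representative differs from $r(y)$} and run the induction on $\dist{\remove{S}w, \cdot}$ within $R_w$; at that point your argument becomes the paper's.
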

A proof of this lemma can be found in Appendix~\ref{apx:proofs}.

\subsection{Variants and Algorithmic Improvements}
\label{sub:algorithm_eng}

\subsubsection{Semi-local Swaps}
One weakness of the algorithm in Section~\ref{sec:swapping}
is that it only performs local vertex exchanges. In particular, the
algorithm always swaps a vertex $u \in S$ and a vertex $v \in N(u) \setminus S$.
This condition can be generalized:
in particular, it is sufficient that $u \in S$ also satisfies
$u \in N(\add{S}v)$.
In this situation, the distances of all vertices can still only change by a single hop
and the proofs of the correctness of the algorithm remain valid.
Note that this naturally partitions candidates $u$ into two
sets: first, the set $N(v) \cap S$ of candidates that the original
algorithm considers, and the set $N(S) \cap S$.
Candidates in the latter set can be determined independently of $v$;
indeed, they can be swapped with any $v \in N(S) \setminus S$.
Hence, our swap selection strategy from
Section~\ref{sec:choose-swap} continues to work with little modification.

\subsubsection{Restricted Swaps}
\label{par:restricted_ls}
To further improve the performance of our Local-Swap algorithm
at the cost of its solution quality, we consider the following variant:
instead of selecting the pair of vertices $(u, v)$ that
maximize $|D_v| - |\Lambda_u|$, we just select the vertex $v$ that maximizes
$|D_v|$ and then choose $u \in N(v) \cap S$ such that $|\Lambda_u|$
is minimized. This restricts the choices for $u$; hence, we expect
this \emph{Restricted Local-Swap} algorithm to yield
solutions of worse quality. Due to the restriction, however,
it is also expected to converge faster.

\subsubsection{Local Grow-Shrink}
\label{par:local_gs}
During exploratory experiments, it turned out that the Grow-Shrink algorithm
sometimes overestimates the lower bound $|D_v|\cdot \dist{S, v}$
on the decrease $f(S) - f(\add{S}v)$ of the farness
after adding an element $v$. This happens because errors in the
approximation of $|D_v|$ are amplified by multiplying with a large
$\dist{S, v}$. Hence, we found that restricting the algorithm's choices
for $v$ to vertices near $S$ improves the
solution quality of the algorithm.

It may seem that this modification makes Grow-Shrink vulnerable to the
same weaknesses as Local-Swap. Namely, local choices
imply that large numbers of exchanges might be required to reach local optima
and it becomes hard to escape these local optima.
Fortunately, additional techniques discussed in the next section
can be used to avoid this problem.

\subsubsection{Extended Grow-Shrink}
\label{par:extended_gs}
Even in the case of Grow-Shrink,
the bound of Lemma~\ref{lemma:additions} becomes worse for vertices
at long distances from the current group.
As detailed in Section~\ref{sec:dagsize},
this happens as
our reachability set size approximation approach does not
take back edges into account.
This is a problem especially on graphs with a large diameter
where we have to expect that many back edges exist.
We mitigate this problem (as well as the problems
mentioned in Section~\ref{par:local_gs})
by allowing the group to grow by more
than one vertex before we shrink it again.
In particular, we allow the group to grow to size $k + h$ for some $h \geq 1$,
before we shrink it back to $k$.

In our experiments in Section~\ref{sec:experiments}, we consider two strategies to choose
$h$. First, we consider constant values for $h$.
However, this is not expected to be appropriate for all graphs:
specifically, we want to take the diameter of the graph into account.
Hence, a more sophisticated strategy selects
$h = \mathrm{diam}(G)/{k^p}$ for a fixed $p$.
This strategy is inspired by mesh-like graphs
(\eg real-world road networks and some other infrastructure networks):
if we divide a quadratic two-dimensional mesh $G$
into $k$ quadratic sub-meshes (where $k$ is a power of 2),
the diameter of the sub-meshes is $\mathrm{diam}(G)/\sqrt{k}$.
Hence, if we assume that each vertex of the group covers an equal
amount of vertices in the remaining graph, $h = \mathrm{diam}(G)/\sqrt{k}$
vertex additions should be sufficient to find at least one
good vertex that will improve a size-$k$ group.
As we expect that real-world networks deviate from ideal
two-dimensional meshes to some extend, we consider not only $p = \frac12$
but also other values of $p$.

\subsubsection{Engineering the reachability set size approximation algorithm}
\label{sub:reachability_engineering}
Cohen's reachability set size approximation algorithm~\cite{cohen1997size} has multiple parameters
that need to be chosen appropriately: in particular,
there is a choice of probability distribution (exponential vs. uniform),
the estimator function (averaging vs. selection-based),
the number of samples and the width of each random number.
For the estimator, we use an averaging estimator, as this estimator
can be implemented more efficiently than a selection-based estimator
(\ie it only requires averaging numbers instead of finding
the $k$-smallest number).
We performed exploratory experiments to determine a good configuration
of the remaining parameters.
It turns out that, while the exponential distribution
empirically offers better accuracy than the uniform distribution, the algorithm
can be implemented much more efficiently using the uniform distribution:
in particular, for the uniform distribution, it is sufficient to generate
and store the per-vertex random numbers as (unsigned) integers, while the exponential
distribution requires floating point calculations.
We deal with the decrease in accuracy by simply gathering more samples.
For the uniform distribution
and real-world graphs, 16 bits per integer turns out
to yield sufficient accuracy.
In this setting, we found that 16 samples
are enough to accurately find the vertex with highest reachability set size.
In particular, while the theoretical guarantee in~\cite{cohen1997size}
requires the number of samples to grow with $\log |V|$,
we found this number to have a negligible impact on the
solution quality of our group closeness heuristic
(see Appendix~\ref{apx:samples_exp}).

\subsubsection{Memory latency in reachability set size approximation}
It is well-known that the empirical performance of graph traversal
algorithms (like BFS and Dijkstra) is often limited by memory
latency~\cite{bader2005architectural, lumsdaine2007challenges}.
Unfortunately, the reachability set size approximation
needs to perform multiple traversals of the same graph.
To mitigate this issue, we perform multiple iterations
of the approximation algorithm at the same time.
This technique increases the running time performance
of the algorithm at the cost of its memory footprint.
More precisely, during each traversal of the graph,
we store 16 random integers per vertex and
we aggregate
all 16 minimal values per vertex at the same time.
This operation can be performed very efficiently by utilizing SIMD
vector operations.
In particular, we use 256-bit AVX operations of our Intel Xeon CPUs
to take the minimum of all 16 values at the same time.
As mentioned above, aggregating 16 random numbers per vertex
is enough for our use case; thus, using SIMD aggregation,
we only need to perform a single traversal of the graph.

\subsubsection{Accepting swaps and stopping condition}
As detailed in Sections~\ref{sec:swapping} and~\ref{sec:alternating},
our algorithms stop once they cannot find another vertex exchange
that improves the closeness score of the current group.
Exchanges that worsen the score are not accepted.
To prevent vertex exchanges that change the group closeness score
only negligibly, we also set a limit on the number of vertex exchanges.
In our experiments, we choose a conservative limit that does not
impact the solution quality measurably (see Appendix~\ref{apx:limit_impact}).

\section{Experiments}
\label{sec:experiments}
\todo{For final version: consider adding markers to bar plots.}
In this section, we evaluate the performance of our algorithms against
the state-of-the-art greedy algorithm of Bergamini
\etal~\cite{bergamini2018scaling}.\footnote{In our experiments we do
	not consider the naive greedy algorithm and
	the OSA heuristic of~\cite{chen2016efficient} because they are both
	dominated by~\cite{bergamini2018scaling}.}
As mentioned in Section~\ref{sec:intro}, it has been shown empirically
that the solution quality yielded by the greedy algorithm is often
nearly-optimal.
We evaluate two variants, \emph{LS} and \emph{LS-restrict} (see Section~\ref{par:restricted_ls}),
of our Local-Swap algorithm, and three
variants, \emph{GS}, \emph{GS-local} (see Section~\ref{par:local_gs})
and \emph{GS-extended} (see Section~\ref{par:extended_gs})
of our Grow-Shrink algorithm.
We evaluate these algorithms for group sizes
of $k \in \{5, 10, 20, 50, 100\}$ on the largest connected component
of the input graphs.
We measure the performance in terms of running time and closeness
of the group computed by the algorithms.
Because our algorithms construct an initial group $S$ by selecting $k$
vertices uniformly at random (see Section~\ref{sec:local_search}),
we average the results of five runs,
each one with a different random seed, using the geometric mean
over speedup and relative
closeness.\footnote{These five runs are done to average out
	particularly bad (or good) selections of initial groups;
	as one can see from Appendix~\ref{apx:samples_exp},
	the variance due to the randomized reachability set size
	algorithm is negligible.}
Unless stated otherwise, our experiments are based on the graphs
listed in
Tables~\ref{tab:inst_unw_table} and~\ref{tab:inst_w_table}.
They are all undirected and have been downloaded from
the public repositories 9th DIMACS Challenge~\cite{demetrescu2009shortest} and
KONECT~\cite{kunegis2013konect}.
The running time of the greedy baseline
varies between 10 minutes and 2 hours
on those instances.

Our algorithms are implemented in the
NetworKit~\cite{staudt2016networkit}
C++ framework and use
PCG32~\cite{pcg2014} to generate random numbers.
All experiments were managed by the SimexPal software to ensure
reproducibility~\cite{angrimanGLMNPT19}.
Experiments were executed with sequential code on a Linux machine with
an Intel Xeon Gold 6154 CPU and 1.5 TiB of memory.

\subsection{Results for Extended Grow-Shrink}
\label{sub:extended_exps}
\begin{figure}
\centering
\begin{subfigure}[t]{.041\columnwidth}
\centering
\includegraphics[angle=90]{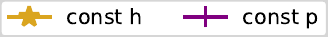}
\end{subfigure}\hfill
\begin{subfigure}[t]{.479\columnwidth}
\includegraphics{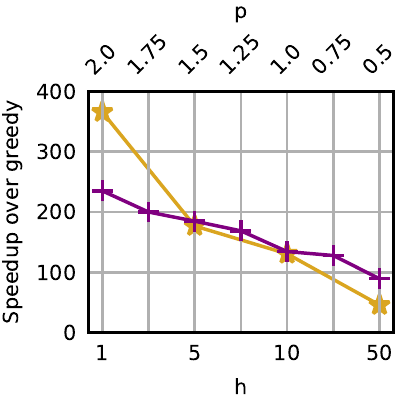}
\caption{Speedup over the greedy algorithm (geom. mean).}
\end{subfigure}\hfill
\begin{subfigure}[t]{.479\columnwidth}
\includegraphics{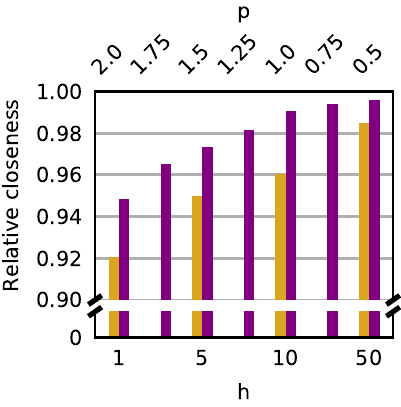}
\caption{Closeness score relative to the score of the group returned
	by greedy (geom. mean).}
\label{fig:quality_multi}
\end{subfigure}\hfill
\caption{Performance of the extended Grow-Shrink algorithm
	for different values of $h$ or $p$ on
	unweighted graphs, $k = 10$.}
\label{fig:plots_multi}
\end{figure}

In a first experiment, we evaluate
performance of our extended Grow-Shrink algorithm
and compare it to the greedy heuristic.
Because of its ability to escape local optima,
we expect this to be the best algorithm in terms of quality;
hence, it should be a good default choice among our algorithms.
For this experiment, we set $k = 10$.

As discussed in Section~\ref{par:extended_gs},
we distinguish two strategies to determine $h$: we either fix
a constant $h$, or we fix a constant $p$.
For each strategy, we evaluate multiple values for $h$ or $p$.
Results for both strategies are shown in Figure~\ref{fig:plots_multi}.
As expected, higher values of $h$ (or, similarly, lower values of $p$)
increase the algorithm's running time;
(while $h > 1$ allows the
algorithm to perform better choices, it does not converge $h$-times as fast).
Still, for all tested values of $h$ or $p$, the extended Grow-Shrink algorithm
is one to two orders of magnitude faster than the greedy baseline.
Furthermore, values of $p < 1$ yield results of very good quality:
for $p = 0.75$, for example, we achieve
a quality of $\onedigit{\gsPSevenFiveKQualNum}\%$.
At the same time, using this setting for $p$, our algorithm is
$\onedigit{\gsPSevenFiveKSpeed}\times$ faster than the greedy algorithm.
We remark that for all but the smallest values of $h$
(\ie those corresponding to the lowest quality), choosing constant
$p$ is a better strategy than choosing constant $h$:
for the same running time, constant $p$ always achieves solutions
of higher quality.

\subsection{Scalability to Large Graphs}
\label{sub:big_graphs}
We also analyze the running time of our extended Grow-Shrink algorithm on
large-scale networks. To this end, we switch to graphs larger than the ones
in Table~\ref{tab:inst_unw_table}. We fix $p = 0.75$, as
Section~\ref{sub:extended_exps} demonstrated that this setting results in
a favorable trade-off between solution quality and running time.
The greedy algorithm is not included in this experiment as it requires
multiple hours of running time, even for the smallest
real-world graphs that we consider in this part.
Hence, we also do not compare against its solution quality in
this experiment.

\subsubsection{Results on Synthetic Data}
\begin{figure}[t]
\centering
\begin{subfigure}[t]{.5\columnwidth}
\centering
\includegraphics{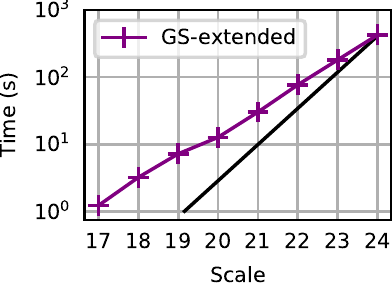}
\caption{R-MAT networks; $2^{17}$
	to $2^{24}$ vertices (up to $268$ million edges).}
\label{fig:run_time_rmat}
\end{subfigure}%
\hfill
\begin{subfigure}[t]{.5\columnwidth}
\centering
\includegraphics{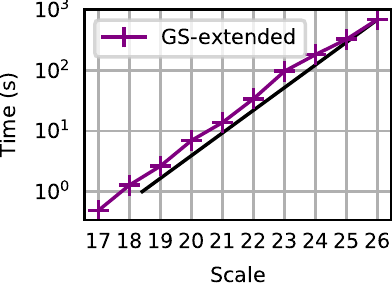}
\caption{Random hyperbolic networks; $2^{17}$
	to $2^{26}$ vertices (up to $671$ million edges).}
\label{fig:run_time_hyp}
\end{subfigure}
\caption{Running time (s) of the extended Grow-Shrink algorithm
	on synthetic graphs (black line = linear regression); $p = 0.75$, $k = 10$.}
\label{fig:synthetic_insts}
\end{figure}

Figure~\ref{fig:synthetic_insts} shows the average running time
of our algorithm
on randomly generated R-MAT~\cite{chakrabarti2004r} graphs as well as
graphs from a generator~\cite{DBLP:conf/hpec/LoozOLM16} for random hyperbolic graphs.
Like R-MAT, the random hyperbolic model yields graphs with a skewed degree
distribution, similar to the one found in real-world complex networks.
In the (log-log) plot, the straight lines represent a linear regression of the running times.
In both cases, the running time curves are at most as
steep as the regression line, \ie the running time behaves linearly
in the number of vertices
for the considered network models and sizes.

\subsubsection{Results on Large Real-World Data Sets}
\begin{table}[tb]
\caption{Running time of the extended Grow-Shrink algorithm on
	large real-world networks; $p = 0.75$, $k = 10$.}
\label{tab:large_insts}
\centering
\begin{tabular}{lrrr}
Network & $|V|$ & $|E|$ & Time (s) \\
\toprule
soc-LiveJournal1 & \numprint{4843953} & \numprint{42845684} & \numprint{95.3}\\
livejournal-links & \numprint{5189808} & \numprint{48687945} & \numprint{135.6}\\
orkut-links & \numprint{3072441} & \numprint{117184899} & \numprint{199.9}\\
dbpedia-link & \numprint{18265512} & \numprint{126888089} & \numprint{368.0}\\
dimacs10-uk-2002 & \numprint{18459128} & \numprint{261556721} & \numprint{333.1}\\
wikipedia\_link\_en & \numprint{13591759} & \numprint{334640259} & \numprint{680.1}\\
\bottomrule
\end{tabular}

\end{table}

Table~\ref{tab:large_insts} reports the algorithm's performance
on large real-world graphs.
In contrast to the greedy algorithm (which would require hours),
our extended Grow-Shrink algorithm can handle real-world graphs with
hundreds of millions of edges in a few minutes.
For the orkut-links network, Bergamini \etal~\cite{bergamini2018scaling} report
running times for greedy of 16 \emph{hours} on their machine;
it is the largest instance in their experiments.

\subsection{Accelerating Performance on Unweighted Graphs}
\label{par:unw_graphs}
\begin{figure}[t]
\centering
\begin{subfigure}[t]{\columnwidth}
\centering
\includegraphics{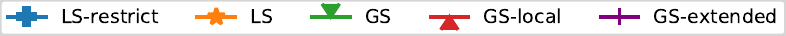}
\end{subfigure}
\centering
\begin{subfigure}[t]{.5\columnwidth}
\centering
\includegraphics{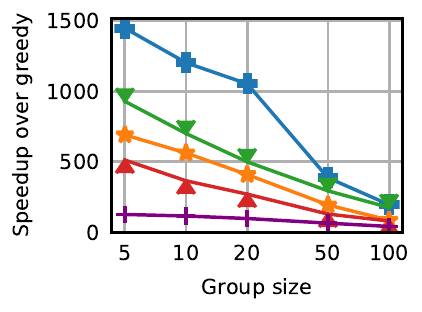}
\caption{Speedup over the greedy algorithm (geom. mean).}
\label{fig:speedup}
\end{subfigure}\hfill
\begin{subfigure}[t]{.5\columnwidth}
\centering
\includegraphics{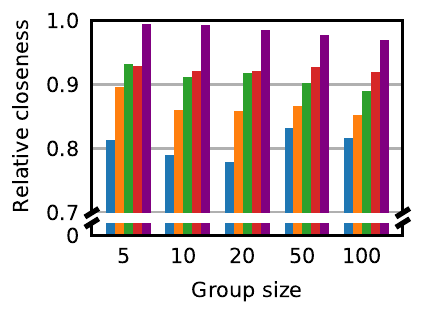}
\caption{Closeness score relative to the score of the group returned by greedy (geom. mean).}
\label{fig:quality}
\end{subfigure}
\caption{Performance of our local search algorithms for different values of $k$;
	unweighted graphs.}
\label{fig:plots}
\end{figure}

While the extended Grow-Shrink algorithm yields results of very high quality,
if quality is not a primary concern,
even faster algorithms might be desirable for very large graphs.
To this end, we also evaluate the performance of the non-extended
Grow-Shrink and the Local Swap algorithms.
For extended Grow-Shrink, we fix $p = 0.75$ again.
The speedup and the quality of our algorithms over the greedy algorithm,
for different values of the group size $k$,
is shown in Figures~\ref{fig:speedup} and~\ref{fig:quality},
respectively.
Note that the greedy algorithm scales
well to large $k$,
so that the speedup of our algorithms decreases with
$k$ (as mentioned in Section~\ref{sec:intro}, the main bottleneck
	of greedy is adding the first vertex to the group).
However, even for large groups
of $k = 100$, all of our algorithms are still at least
$\onedigit{\minSpeedupKHundred}\times$ faster.

After extended Grow-Shrink, our non-extended local version of Grow-Shrink is the next
best algorithm in terms of quality.
As explained in Section~\ref{par:local_gs},
this variant yields better solutions than non-local Grow-Shrink
and gives a speedup of $\onedigit{\gsLocalSpeedupOverNonLocal}\times$
over extended Grow-Shrink with $p = 0.75$ and $k = 10$
(= a speedup of $\onedigit{\gsLocalSpeedKTen}\times$ over greedy);
the solution quality in this case is $\onedigit{\gsLocalQualKTen}\%$
of the greedy quality.

The non-restricted Local-Swap algorithm is dominated by Grow-Shrink,
both regarding running time and solution quality.
Furthermore, compared to the other algorithms,
the restricted Local-Swap algorithm only gives a rough estimate of the group
with highest group closeness; however, it is also significantly
faster than all other algorithms and may be employed
during an exploratory analysis of graph data sets.

\subsection{Results on Weighted Road Networks}
\begin{figure}[t]
\centering
\begin{subfigure}[t]{\columnwidth}
\centering
\includegraphics{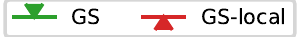}
\end{subfigure}
\centering
\begin{subfigure}[t]{.5\columnwidth}
\centering
\includegraphics{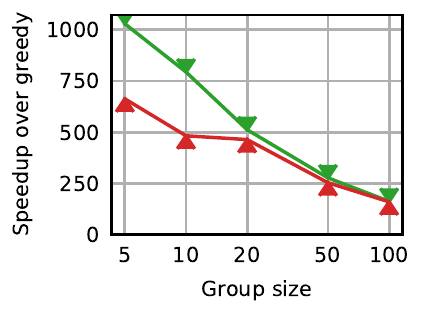}
\caption{Speedup over the greedy algorithm (geom. mean).}
\label{fig:speedup_weighted}
\end{subfigure}%
\hfill
\begin{subfigure}[t]{.5\columnwidth}
\centering
\includegraphics{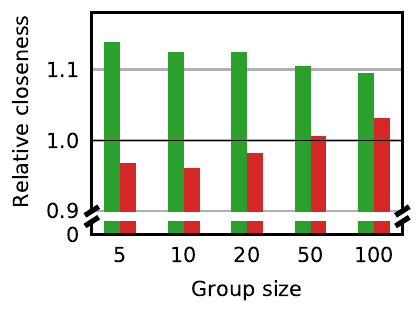}
\caption{Closeness score relative to the score of the group returned by greedy (geom. mean).}
\label{fig:quality_weighted}
\end{subfigure}
\caption{Performance of our local search algorithms for different values of $k$;
	weighted graphs.}
\label{fig:plots_weighted}
\end{figure}

Recall that the Local-Swaps algorithm does not work for weighted graphs;
we thus report only Grow-Shrink data in the weighted case.
The performance of Grow-Shrink and local Grow-Shrink on weighted graphs is shown in
Figure~\ref{fig:plots_weighted}.
In contrast to unweighted graphs, the quality of the non-local Grow-Shrink algorithm is superior to
the greedy baseline for all considered group sizes.
Furthermore, in contrast to the unweighted case, the ability to
perform non-local vertex exchanges greatly benefits the
non-local Grow-Shrink algorithm
compared to local
Grow-Shrink.\footnote{For this reason, we do not include the extended Grow-Shrink algorithm
	in this experiment. In fact, we expect that it improves only slightly on GS-local (red line/bars
	in Fig.~\ref{fig:plots_weighted})
	but cannot compete with (non-local) GS:
the ability to perform non-local vertex exchanges, as done by GS (green line/bars in Fig.~\ref{fig:plots_weighted})
	appears to be crucial to obtain high-quality results on weighted graphs.}
Thus, Grow-Shrink clearly dominates both the greedy algorithm and
local Grow-Shrink on the weighted graphs
in our benchmark set -- both in terms of speed \emph{and} solution quality.

\subsection{Summary of Experimental Results}
On unweighted graphs,
a good trade-off between solution quality and running time
is achieved by the extended Grow-Shrink algorithm with constant $p = 0.75$.
This strategy yields solutions with at least $\onedigit{\gsPSevenFiveKQualNum}\%$
of the closeness score of a greedy solution (greedy, in turn, was at most $3\%$
away from the optimum on small networks in previous work~\cite{bergamini2018scaling}).
Extended Grow-Shrink is $\onedigit{\gsPSevenFiveKSpeed}\times$ faster than
greedy ($k = 10$). Thus, it is able to handle graphs with
hundreds of millions of edges in a few minutes -- the state of the art needs multiple hours.
If a fast but inaccurate algorithm is needed for exploratory analysis of graph data sets,
we recommend to
run the non-extended Grow-Shrink algorithm, or, if only a very coarse estimate
of the group with maximal closeness is needed, restricted Local-Swap.

On weighted graphs, we recommend to always use our Grow-Shrink algorithm,
as it outperforms the greedy state of the art both in terms of quality
(yielding solution that are on average $\onedigit{\wGSQualInc}\%$ better than greedy solutions)
and in terms of running time performance (with a speedup of two orders of magnitude),
at the same time.

\section{Conclusions}

In this paper, we introduced two families of new local-search algorithms for group closeness maximization in
large networks.
As maximizing group closeness exactly is infeasible for graphs with more than a few thousand edges,
our algorithms are heuristics (just like the state-of-the-art greedy algorithm).
However, for small real-world networks, the results
are empirically known to be close to optimal
solutions~\cite{bergamini2018scaling}.

Compared to previous state-of-the-art heuristics,
our algorithms (in particular: extended Grow-Shrink)
allow to find groups with high closeness
centrality in real-world networks with hundreds of millions of edges
in seconds to minutes instead of multiple hours,
while sacrificing less than $1\%$ in quality.
In weighted graphs, Grow-Shrink (GS) even dominates the best
known heuristic: the GS solution quality is more than $10\%$ higher
and GS is two orders of magnitude faster.
Adapting the algorithm to even larger graphs in distributed memory
is left to future work.

\bibliographystyle{IEEEtran}
\bibliography{IEEEabrv,biblio}

% Generated by IEEEtran.bst, version: 1.12 (2007/01/11)
\begin{thebibliography}{10}
\providecommand{\url}[1]{#1}
\csname url@samestyle\endcsname
\providecommand{\newblock}{\relax}
\providecommand{\bibinfo}[2]{#2}
\providecommand{\BIBentrySTDinterwordspacing}{\spaceskip=0pt\relax}
\providecommand{\BIBentryALTinterwordstretchfactor}{4}
\providecommand{\BIBentryALTinterwordspacing}{\spaceskip=\fontdimen2\font plus
\BIBentryALTinterwordstretchfactor\fontdimen3\font minus
  \fontdimen4\font\relax}
\providecommand{\BIBforeignlanguage}[2]{{%
\expandafter\ifx\csname l@#1\endcsname\relax
\typeout{** WARNING: IEEEtran.bst: No hyphenation pattern has been}%
\typeout{** loaded for the language `#1'. Using the pattern for}%
\typeout{** the default language instead.}%
\else
\language=\csname l@#1\endcsname
\fi
#2}}
\providecommand{\BIBdecl}{\relax}
\BIBdecl

\bibitem{newman2018networks}
M.~Newman, \emph{Networks}, 2nd~ed.\hskip 1em plus 0.5em minus 0.4em\relax
  Oxford university press, 2018.

\bibitem{boldi2014axioms}
P.~Boldi and S.~Vigna, ``Axioms for centrality,'' \emph{Internet Mathematics},
  vol.~10, no. 3-4, pp. 222--262, 2014.

\bibitem{bavelas1948mathematical}
A.~Bavelas, ``A mathematical model for group structures,'' \emph{Human
  organization}, vol.~7, no.~3, p.~16, 1948.

\bibitem{bergamini2019computing}
E.~Bergamini, M.~Borassi, P.~Crescenzi, A.~Marino, and H.~Meyerhenke,
  ``Computing top-k closeness centrality faster in unweighted graphs,''
  \emph{ACM Transactions on Knowledge Discovery from Data (TKDD)}, vol.~13,
  no.~5, p.~53, 2019.

\bibitem{okamoto2008ranking}
K.~Okamoto, W.~Chen, and X.-Y. Li, ``Ranking of closeness centrality for
  large-scale social networks,'' in \emph{International Workshop on Frontiers
  in Algorithmics}.\hskip 1em plus 0.5em minus 0.4em\relax Springer, 2008, pp.
  186--195.

\bibitem{olsen2014efficient}
P.~W. Olsen, A.~G. Labouseur, and J.-H. Hwang, ``Efficient top-k closeness
  centrality search,'' in \emph{2014 IEEE 30th International Conference on Data
  Engineering}.\hskip 1em plus 0.5em minus 0.4em\relax IEEE, 2014, pp.
  196--207.

\bibitem{chen2016efficient}
C.~Chen, W.~Wang, and X.~Wang, ``Efficient maximum closeness centrality group
  identification,'' in \emph{Australasian Database Conference}.\hskip 1em plus
  0.5em minus 0.4em\relax Springer, 2016, pp. 43--55.

\bibitem{zhu2014maximizing}
T.~Zhu, B.~Wang, B.~Wu, and C.~Zhu, ``Maximizing the spread of influence
  ranking in social networks,'' \emph{Information Sciences}, vol. 278, pp.
  535--544, 2014.

\bibitem{gkantsidis2004random}
C.~Gkantsidis, M.~Mihail, and A.~Saberi, ``Random walks in peer-to-peer
  networks,'' in \emph{IEEE INFOCOM 2004}, vol.~1.\hskip 1em plus 0.5em minus
  0.4em\relax IEEE, 2004.

\bibitem{leydesdorff2007betweenness}
L.~Leydesdorff, ``Betweenness centrality as an indicator of the
  interdisciplinarity of scientific journals,'' \emph{Journal of the American
  Society for Information Science and Technology}, vol.~58, no.~9, pp.
  1303--1319, 2007.

\bibitem{everett1999centrality}
M.~G. Everett and S.~P. Borgatti, ``The centrality of groups and classes,''
  \emph{The Journal of mathematical sociology}, vol.~23, no.~3, pp. 181--201,
  1999.

\bibitem{bergamini2018scaling}
E.~Bergamini, T.~Gonser, and H.~Meyerhenke, ``Scaling up group closeness
  maximization,'' in \emph{2018 Proceedings of the Twentieth Workshop on
  Algorithm Engineering and Experiments (ALENEX)}.\hskip 1em plus 0.5em minus
  0.4em\relax SIAM, 2018, pp. 209--222.

\bibitem{DBLP:journals/corr/abs-1710-01144}
\BIBentryALTinterwordspacing
------, ``Scaling up group closeness maximization,'' \emph{CoRR}, vol.
  abs/1710.01144, 2017, updated on May 15, 2019. [Online]. Available:
  \url{http://arxiv.org/abs/1710.01144}
\BIBentrySTDinterwordspacing

\bibitem{le2014powers}
F.~Le~Gall, ``Powers of tensors and fast matrix multiplication,'' in
  \emph{Proceedings of the 39th international symposium on symbolic and
  algebraic computation}.\hskip 1em plus 0.5em minus 0.4em\relax ACM, 2014, pp.
  296--303.

\bibitem{cohen1997size}
E.~Cohen, ``Size-estimation framework with applications to transitive closure
  and reachability,'' \emph{Journal of Computer and System Sciences}, vol.~55,
  no.~3, pp. 441--453, 1997.

\bibitem{bader2005architectural}
D.~A. Bader, G.~Cong, and J.~Feo, ``On the architectural requirements for
  efficient execution of graph algorithms,'' in \emph{Parallel Processing,
  2005. ICPP 2005. International Conference on}.\hskip 1em plus 0.5em minus
  0.4em\relax IEEE, 2005, pp. 547--556.

\bibitem{lumsdaine2007challenges}
A.~Lumsdaine, D.~Gregor, B.~Hendrickson, and J.~Berry, ``Challenges in parallel
  graph processing,'' \emph{Parallel Processing Letters}, vol.~17, no.~01, pp.
  5--20, 2007.

\bibitem{demetrescu2009shortest}
C.~Demetrescu, A.~V. Goldberg, and D.~S. Johnson, \emph{The Shortest Path
  Problem: Ninth DIMACS Implementation Challenge}.\hskip 1em plus 0.5em minus
  0.4em\relax American Mathematical Soc., 2009, vol.~74.

\bibitem{kunegis2013konect}
J.~Kunegis, ``Konect: the {K}oblenz {N}etwork {C}ollection,'' in
  \emph{Proceedings of the 22nd International Conference on World Wide
  Web}.\hskip 1em plus 0.5em minus 0.4em\relax ACM, 2013, pp. 1343--1350.

\bibitem{staudt2016networkit}
C.~L. Staudt, A.~Sazonovs, and H.~Meyerhenke, ``Networkit: A tool suite for
  large-scale complex network analysis,'' \emph{Network Science}, vol.~4,
  no.~4, pp. 508--530, 2016.

\bibitem{pcg2014}
M.~E. O'Neill, ``Pcg: A family of simple fast space-efficient statistically
  good algorithms for random number generation,'' Harvey Mudd College,
  Claremont, CA, Tech. Rep. HMC-CS-2014-0905, Sep. 2014.

\bibitem{angrimanGLMNPT19}
E.~Angriman, A.~v.~d. Grinten, M.~v. Looz, H.~Meyerhenke, M.~N{\"o}llenburg,
  M.~Predari, and C.~Tzovas, ``Guidelines for experimental algorithmics: A case
  study in network analysis,'' \emph{Algorithms}, vol.~12, no.~7, p. 127, 2019.

\bibitem{chakrabarti2004r}
D.~Chakrabarti, Y.~Zhan, and C.~Faloutsos, ``R-mat: A recursive model for graph
  mining,'' in \emph{Proceedings of the 2004 SIAM International Conference on
  Data Mining}.\hskip 1em plus 0.5em minus 0.4em\relax SIAM, 2004, pp.
  442--446.

\bibitem{DBLP:conf/hpec/LoozOLM16}
M.~von Looz, M.~S. {\"O}zdayi, S.~Laue, and H.~Meyerhenke, ``Generating massive
  complex networks with hyperbolic geometry faster in practice,'' in \emph{2016
  IEEE High Performance Extreme Computing Conference (HPEC)}.\hskip 1em plus
  0.5em minus 0.4em\relax IEEE, 2016, pp. 1--6.

\bibitem{murphy2010introducing}
R.~C. Murphy, K.~B. Wheeler, B.~W. Barrett, and J.~A. Ang, ``Introducing the
  graph 500,'' \emph{Cray Users Group (CUG)}, vol.~19, pp. 45--74, 2010.

\end{thebibliography}

\appendices
\section{Technical Proofs}
\label{apx:proofs}

\begin{proof}[Proof of Lemma~\ref{lemma:additions}]
	Let $x \in D_v$.
	Because of the sub path optimality property of shortest paths,
	it is clear that $\mathrm{dist}(S, x)
	= \mathrm{dist}(S, v) + \mathrm{dist}(v, x)$
	(as $v$ is a predecessor of $x$ on a shortest path from $S$).
	On the other hand, adding $v$ to $S$ decreases the length of this path
	(as the distance between $S$ and $v$ becomes zero);
	in other words: $\dist{\add{S}v, x} = \mathrm{dist}(v, x)$.
	These observations allow us to express the right-hand side of Eq.~\ref{eq:delta-minus}
	as $\mathrm{dist}(S, x) - \mathrm{dist}(\add{S}v, x) = \mathrm{dist}(S, v)$.
	Summing this equation for all vertices in $D_v$ yields
	the term $|D_v|\cdot \dist{S, v}$ of the lemma.
	For vertices $x \notin D_v$, it holds that
	$\dist{S, x} - \dist{\add{S}v, x} \geq 0$,
	hence the inequality.

	For the statement about the unweighted case,
	we need to show that the contribution of all other vertices is zero,
	\ie $\mathrm{dist}(S, x) = \mathrm{dist}(\add{S}v, x)$
	for all vertices $x \notin D_v$.
	Note that
	$\mathrm{dist}(S, x) < \mathrm{dist}(S, v) + \mathrm{dist}(v, x)$
	(otherwise $x$ would be in $D_v$) and $\mathrm{dist}(S, v) = 1$. Thus,
	$\mathrm{dist}(S, x) \leq \mathrm{dist}(v, x)$ which
	completes the proof.
\end{proof}

\begin{proof}[Proof of Lemma~\ref{lemma:dijkstra}]
	Let $Z$ be the set of vertices that need to be updated by the
	algorithm, \ie $Z$ equals the set $R_u \cup R'_u$ before
	the Dijkstra-like algorithm runs.
	We have not shown yet whether all vertices in $Z$ are indeed updated.
	For the remainder of this proof, all symbols
	(such as $r'$, $d'$ and $R'$)
	refer to the state of our data structures
	after the algorithm terminates.
	To prove the lemma, it is sufficient to prove that
	$Z = \bigcup_{w \in S} (Z \cap R'_w)$ (\ie that no $r'$ remains
	undefined, or, in other words, $r'$ is updated wherever necessary)
	and that
	$\dist{\remove{S}{r(x)}, x} = \dist{r'(x), x}$ for all $x \in Z$
	(\ie that the definition of $r'$ is respected).

	Let us first prove that $\bigcup_{w \in S} (Z \cap R'_w) = Z$.
	Let $z \in Z$. There exists a path from every $w \in S$ to $z$
	and each such path contained at least one boundary pair $(x, y)$
	before the algorithm started. Indeed, there is a boundary pair for
	the first vertex $y$ on that path that is also in $Z$.
	Thus, the algorithm sets $r'(y) = w$ for some $w \in S$
	(\ie $y \in Z \cap R'_w$) and propagates the update of $r'$
	along the path from $w$ to $z$.
	We have to prove that our pruning condition does not
	prevent any necessary update along this path.
	Hence, let $(x, y)$ be a pair of vertices so that
	the algorithm is pruned before visiting $y$ from $x$.
	Only the $y \in Z$ case is interesting, as $r'$ must already be correct otherwise.
	Pruning only happens if $r'(x) = r(y)$ and therefore $r(x) \neq r(y)$.
	But in this case, $(x, y)$ was a $d$-boundary pair
	and the preceding argument shows that $y \in \bigcup_{w \in S} Z \cap R'_w$.

	Now consider the second part of the proof.
	Let $w \in S$ be any group vertex and let $y \in Z \cap R_w$
	be any vertex that is updated by the algorithm with $r(y) = w$.
	The algorithm guarantees that
	$\dist{\remove{S}w, y} \leq \dist{r'(y), y} \leq d'(y)$
	as $r'(y) \in \remove{S}w$ and
	(by construction of the algorithm) $d'$ is the length
	of a (not yet proven to be shortest) path from $r'(y)$ to $y$.
	It is sufficient to show that this path is a shortest one,
	\ie $\dist{\remove{S}w, y} = d'(y)$.
	We prove this statement for all $y \in R_w$ by
	an inductive argument using $\dist{\remove{S}w, y}$.
	We distinguish two cases depending on whether
	there exists a neighbor of $y$ in $R_w$ that
	is on a shortest path from $\remove{S}w$ to $y$.
	First, we handle the case that no such neighbor exists.
	In this case, $r(x) \neq w$ holds for all $x \in N(y)$
	on shortest paths from $\remove{S}w$ to $y$.
	As $r(x)$ did not change during the algorithm,
	all such $x$ correspond to
	$d$-boundary pairs for $y$
	and $\dist{\remove{S}w, y}$ is the minimal boundary distance
	over all these pairs $(x, y)$.
	Hence, $d'(y)$ was updated correctly before the Dijkstra-like algorithm ran.
	On the other hand, let $x \in N(y) \cap R_w$ be a neighbor
	of $y$ that is on a shortest path from $\remove{S}w$ to $y$.
	$x \in R_w$ implies $r'(x) \neq w$; thus, the algorithm cannot
	be pruned when visiting $y$ from $x$.
	In this case, however, the algorithm sets $d'(y) = d'(x) + \dist{x, y}$.
	As $\dist{\remove{S}w, x} < \dist{\remove{S}w, y}$,
	the induction yields that $d'(x)$ is already correct, \ie $d'(x) = \dist{\remove{S}w, x}$.
	Since $x$ is on a shortest path from $\remove{S}w$ to $y$,
	$d'(y)$ is also updated correctly.
\end{proof}

\section{Details of the Experimental Setup}
\label{apx:experimental_details}
\begin{table}[h]
\centering
\caption{Unweighted real-world networks used in the experiments.}
\label{tab:inst_unw_table}
\begin{tabular}{lrrr}
Network & $|V|$ & $|E|$ & Category \\
\toprule
dimacs9-NY & \numprint{264346} & \numprint{365050} & Road\\
dimacs9-BAY & \numprint{321270} & \numprint{397415} & Road\\
web-Stanford & \numprint{255265} & \numprint{1941926} & Hyperlink\\
hyves & \numprint{1402673} & \numprint{2777419} & Social\\
youtube-links & \numprint{1134885} & \numprint{2987468} & Social\\
com-youtube & \numprint{1134890} & \numprint{2987624} & Social\\
web-Google & \numprint{855802} & \numprint{4291352} & Hyperlink\\
trec-wt10g & \numprint{1458316} & \numprint{6225033} & Hyperlink\\
dimacs10-eu-2005 & \numprint{862664} & \numprint{16138468} & Road\\
soc-pokec-relationships & \numprint{1632803} & \numprint{22301964} & Social\\
wikipedia\_link\_ca & \numprint{926588} & \numprint{27133794} & Hyperlink\\
\bottomrule
\end{tabular}

\end{table}
\begin{table}[h]
\centering
\caption{Weighted networks used in the experiments.
	All networks are road networks of different states of the US.}
\label{tab:inst_w_table}
\begin{tabular}{lrr}
State & $|V|$ & $|E|$ \\
\toprule
DC & \numprint{9522} & \numprint{14807}\\
HI & \numprint{21774} & \numprint{26007}\\
AK & \numprint{48560} & \numprint{55014}\\
DE & \numprint{48812} & \numprint{59502}\\
RI & \numprint{51642} & \numprint{66650}\\
CT & \numprint{152036} & \numprint{184393}\\
ME & \numprint{187315} & \numprint{206176}\\
\bottomrule
\end{tabular}
\hfill
\begin{tabular}{lrr}
State & $|V|$ & $|E|$ \\
\toprule
ND & \numprint{203583} & \numprint{249809}\\
SD & \numprint{206998} & \numprint{249828}\\
WY & \numprint{243545} & \numprint{293825}\\
ID & \numprint{265552} & \numprint{310684}\\
MD & \numprint{264378} & \numprint{312977}\\
WV & \numprint{292557} & \numprint{320708}\\
NE & \numprint{304335} & \numprint{380004}\\
\bottomrule
\end{tabular}

\end{table}

Tables~\ref{tab:inst_unw_table} and ~\ref{tab:inst_w_table} show details about our real-world
instances.
To generate the synthetic graphs in Figure~\ref{fig:synthetic_insts}, we use the same parameter
setting as in the Graph 500's benchmark~\cite{murphy2010introducing} (\ie edge factor 16,
$a = 0.57$, $b = 0.19$, $c = 0.19$,
and $d = 0.05$) for the R-MAT generator. For the random hyperbolic generator,
we set the average degree to $20$, and the exponent of the power-law distribution to 3.

\section{Additional Experiments}

\subsection{Impact of the number of vertex exchanges}
\label{apx:limit_impact}

Figures~\ref{fig:farness_impr} and~\ref{fig:farness_impr_weighted}
depict the relative closeness (compared to the closeness of the group returned by the
greedy algorithm), depending on the progress of the algorithm in
terms of vertex exchanges.
For extended Grow-Shrink, we fix $p = 0.75$.
All of the local search algorithm quickly converge to a value near their
final result; additional vertex exchanges improve the group closeness
score by small amounts.
In order to avoid an excessive amount of iterations,
it seems reasonable to set a limit on the number of
vertex exchanges.
In our experiments we set a conservative limit of 100 exchanges.

\subsection{Impact of reachability set size approximation}
\label{apx:samples_exp}
\begin{figure}
\centering
\begin{subfigure}[t]{\columnwidth}
\centering
\includegraphics{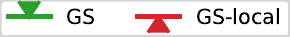}
\end{subfigure}
\begin{subfigure}[t]{.5\columnwidth}
\includegraphics{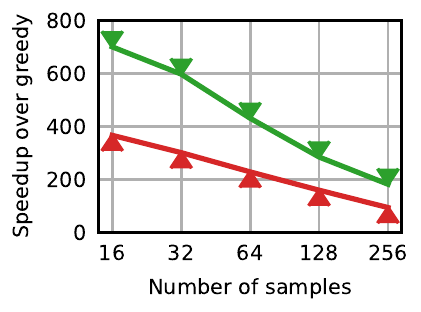}
\caption{Speedup over the greedy algorithm (geom. mean).}
\label{fig:speedup_dagiter}
\end{subfigure}\hfill
\begin{subfigure}[t]{.5\columnwidth}
\includegraphics{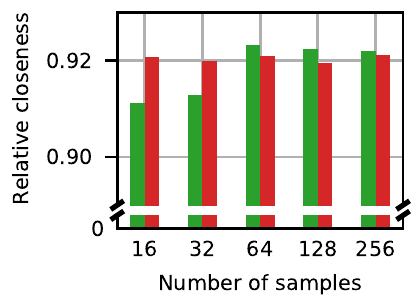}
\caption{Closeness score relative to the score of the group returned by greedy (geom. mean).}
\label{fig:quality_dagiter}
\end{subfigure}\hfill
\caption{Performance of the Grow-Shrink
algorithm for different numbers of samples to estimate
reachability set size; $k = 10$.}
\label{fig:plots_dagiter}
\end{figure}

As mentioned in Section~\ref{par:local_gs}, the errors in the approximation of
$|D_v|$ are amplified by the multiplication with $\dist{S,v}$, and this
results in GS-local computing higher quality solutions than GS.
We study how increasing the accuracy of the
reachability set size approximation by incrementing the
number of samples impacts the performances of both GS
and GS-local.
Figure~\ref{fig:speedup_dagiter} shows that GS needs at least 64
samples to converge to a better local optimum than GS-local.
However, in both cases increasing the number of samples
degrades the speedup without yielding
a significant quality improvement
(Figure~\ref{fig:quality_dagiter}).

\begin{figure}
\centering
\begin{subfigure}[t]{\columnwidth}
\centering
\includegraphics{plots/legend}
\end{subfigure}
\centering
\begin{subfigure}[t]{.5\columnwidth}
\centering
\includegraphics{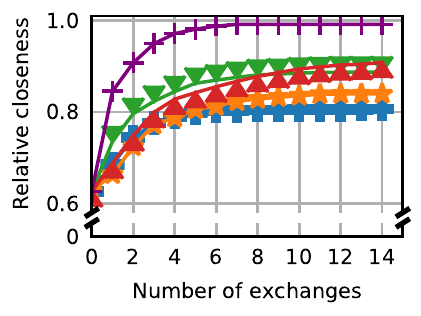}
\caption{Unweighted graphs.}
\label{fig:farness_impr}
\end{subfigure}\hfill
\begin{subfigure}[t]{.5\columnwidth}
\centering
\includegraphics{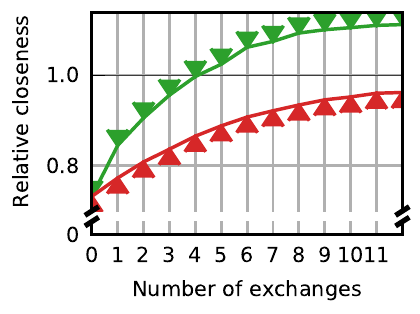}
\caption{Weighted graphs.}
\label{fig:farness_impr_weighted}
\end{subfigure}
\caption{Behavior of the relative closeness score
	(compared to the group returned by greedy, geom. mean)
	over the execution of the algorithms (in terms of vertex exchanges); $k = 10$.}
\end{figure}

\end{document}